\title[Trotter and Zeno Product Formulas]{On Strong Bounds for Trotter and Zeno Product Formulas with Bosonic Applications}
\author{Tim M\"obus}
\affiliation{Department of Mathematics, Technical University of Munich,  M\"unchen, Germany}
\affiliation{Munich Center for Quantum Science and Technology (MCQST),  M\"unchen, Germany}
\email{moebustim@gmail.com}
\begin{document}
\maketitle 
\begin{abstract}
	The Trotter product formula and the quantum Zeno effect are both indispensable tools for constructing time-evolutions using experimentally feasible building blocks. In this work, we discuss assumptions under which quantitative bounds can be proven in the strong operator topology on Banach spaces and provide natural bosonic examples. Specially, we assume the existence of a continuously embedded Banach space, which relatively bounds the involved generators and creates an invariant subspace of the limiting semigroup with a stable restriction. The slightly stronger assumption of admissible subspaces is well-recognized in the realm of hyperbolic evolution systems (time-dependent semigroups), to which the results are extended. By assuming access to a hierarchy of continuously embedded Banach spaces, Suzuki-higher-order bounds can be demonstrated. In bosonic applications, these embedded Banach spaces naturally arise through the number operator, leading to a diverse set of examples encompassing notable instances such as the Ornstein-Uhlenbeck semigroup and multi-photon driven dissipation used in bosonic error correction.
\end{abstract}
\tableofcontents

\newpage
\section{\bfseries Introduction}\label{sec:introduction}
	The experimental motivation for operator product formulas, particularly the (symmetrized) Trotter product formula and the quantum Zeno effect, is the approximation of a time evolution by simpler building blocks. An instance from quantum dynamics is the approximation of a global Hamiltonian evolution by simpler locally defined Hamiltonians \cite{Nielsen.2012}. On an abstract structure-wise level, we establish convergence rates for differences like
	\begin{equation*}
		\|\prod_{j=1}^nF_j(t,s) x - T(t,s)x\|\leq\,?,
	\end{equation*}
	where ${F_j(t,s)}_{j=1}^n$ is a sequence of contractions, $T(t,s)$ a contractive strongly continuous ($C_0$-) evolution system (time-dependent $C_0$-semigroup), and $x$ an element of the underlying Banach space. In the present paper, the building blocks ${F_j(t,s)}_{j=1}^n$ exhibit the structure given by the Trotter product formula or the quantum Zeno effect. For example, the former is given by $T(t,0)=e^{t(A+B)}$ and  $F_j=e^{t\frac{A}{n}}e^{t\frac{B}{n}}$ for two generators $A,B$ of contractive $C_0$-semigroups, for which \citeauthor{Trotter.1959} proved that
	\begin{equation}\label{eq:trotter}
		\Bigl\|\left(e^{t\frac{A}{n}}e^{t\frac{B}{n}}\right)^nx-e^{t(A+B)}x\Bigr\|\rightarrow0 \quad \text{for}\quad n\rightarrow\infty
	\end{equation}
	if the closure of $A+B$ generates a contractive $C_0$-semigroup. The books \cite[Appx.~D]{Zagrebnov.2019,Reed.1980} provide a comprehensive overview of the extensive literature on the topic,. Here, only closely related results are covered.
	
    The seminal work by Sophus Lie in 1875 proves Equation \ref{eq:trotter} as well as a symmetrized product formula (see Eq.~\ref{eq:symmetrized-Lie}) for real matrices (Lie's product formula). Following the symmetrization, which improves the convergence rate to $n^{-2}$, \citeauthor{Suzuki.1976} investigated different families of product formulas achieving higher powers in the convergence rate \cite{Suzuki.1976,Suzuki.1993,Suzuki.1997,Hatano.2005}.\\
    In the direction of the mentioned result by \citeauthor{Trotter.1959} for $C_0$-semigroups, \citeauthor{Kato.1974} proved a convergence result for non-negative self-adjoint operators on a separable Hilbert space with weak assumptions on the common domain as well as a generalization beyond semigroups \cite{Kato.1974,Kato.1978}.\\
    In the direction of Trotter-Suzuki product formulas, the work \cite{Bachmann.2022} proves explicit convergence rate for Hamiltonian evolutions on quantum lattices systems in the infinite volume limit. Recently, the quantitative result \cite{Burgarth.2023} addresses Trotter-Suzuki formulas in infinite closed quantum systems, leveraging knowledge of the eigenstates of the limiting dynamics.\\
    Combining the search for a good convergence rate and the generalization to $C_0$-semigroups, many interesting works have investigated the interplay of the domain of the generators with the convergence rate in the uniform operator topology \cite{Rogava.1993,Neidhardt.1998,Ichinose.1998,Tamura.2000,Neidhardt.1999,Zagrebnov.2022sumrate} and recently in the strong operator topology \cite{Becker.2024, Vanluijk.2024}. Additionally, the extension to evolution systems (time-dependent semigroups) can, for example, be found in \cite{Vuillermot.2009,Neidhardt.2019,Ikeda.2023,Stephan.2023}.
	
	The structural relation to the quantum Zeno effect becomes clear at the example of the projective Zeno effect in closed quantum systems introduced by \citeauthor{Beskow.1967}. Here, $T(t,0)=e^{iPHP}P$ and $F_j=Pe^{t\frac{A}{n}}P$ for any Hermitian matrix $H$ and projection $P$ represent the result by \citeauthor{Misra.1977}, who named the effect after the greek philosopher Zeno of Elea,
	\begin{equation}\label{eq:zeno-misra-sudarshan}
		\|(Pe^{\frac{it}{n}H})^n-e^{it\,PHP}P\|\rightarrow0\qquad\text{for}\qquad n\rightarrow\infty\,.
	\end{equation}
	Since then, the Zeno effect was generalized in several different directions (see \cites{Facchi.2008}{Schmidt.2004}{Itano.1990} for an overview). Building upon the research contributions of \cites{Burgarth.2020}{Mobus.2019}{Barankai.2018}, the quantum Zeno effect has been extended to encompass open and infinite-dimensional quantum systems, featuring general quantum operations but uniformly continuous time evolutions. Especially, we allowed for evolution systems in the work \cite{Mobus.2019}. Going beyond uniformly continuous semigroups, which are always generated by an bounded operator, to $C_0$-semigroups, \citeauthor{Exner.2021} have recently demonstrated the convergence of the projective Zeno effect in the strong operator topology for unbounded Hamiltonians under the weak assumption that $H^{1/2}P$ is densely defined and $(H^{1/2}P)^\dagger H^{1/2}P$ self-adjoint \cite{Exner.2021}. In open quantum systems, the recent work by \citeauthor{Becker.2021} \cite{Becker.2021} and our follow up \cite{Moebus.2023} show that the optimal convergence rate can be achieved in the strong operator topology: Let $(\cL,\cD(\cL))$ be the generator of a contractive $C_0$-semigroup on a Banach space $\cX$ and $M$ a contraction satisfying the uniform power convergence, i.e.~there is a projection $P$ such that $\norm{M^n-P}_{\cX\rightarrow\cX}\leq\delta^n$. If $P\cL P$ is a generator of a contractive $C_0$-semigroup and the asymptotic Zeno condition, i.e.~there is a $b\geq0$ such that $\|Pe^{t\cL}(\1-P)\|_{\cX\rightarrow\cX}, \|(\1-P) e^{t\cL}P\|_{\cX\rightarrow\cX}\leq tb$ for all $t\ge 0$ holds true, then there is an explicit constant $c(t,b)>0$ independent of $n$ so that for any $t\ge 0$ and all $x\in\cD((\cL P)^2)$ 
	\begin{equation*}
		\norm{\left(Me^{\frac{t}{n}\cL}\right)^nx-e^{tP\cL P}Px}\leq\frac{c(t,b)}{n}\left(\norm{x}+\norm{\cL P x}+\norm{(\cL P)^2x}\right)+\delta^n\norm{x}\,.
	\end{equation*}
	Both works (\cite{Becker.2021,Moebus.2023}) are proven in the setup of Banach spaces and can be applied to open or closed quantum systems choosing the underlying space to be an separable Hilbert space $\cH$ or the space of trace-class operators $\mathcal{T}_1(\cH)$.\\
	As already mentioned, the projective quantum Zeno effect frequently measures an evolving quantum state and thereby forces the evolution to remain within the projective space. This concept is, for example, used to construct the $Z(\theta)$-gate in bosonic error correction \cite{Mirrahimi.2014, Touzard.2018, Guillaud.2019, Guillaud.2023}. Moreover, certain measurements can even decouple the system from its environment \cites{Facchi.2004}{Burgarth.2020} or protect the initial state from decoherence \cites{Dominy.2013}{Erez.2004}.
	
	In this work, we generalize the Trotter product formula and the quantum Zeno effect in the following directions. First, we prove quantitative bounds for both by assuming that there exists a stronger Banach space $\cY$, on which the reduction of the limiting $C_0$-semigroup is well-defined, stable, and relatively bounds all involved generators. This assumption is not only motivated by hyperbolic $C_0$-evolution systems (time-dependent $C_0$-semigroups) but also by so-called Sobolev-preserving semigroups \cite{Gondolf.2023}, which is the main example of the present paper. Nevertheless, we also extend the product formulas to evolution systems and present a large class of bosonic semigroups, for which the results can be easily applied. Moreover, we manage to prove higher-order approximations for the Trotter product formula.
	
	For that, the paper is divided into four parts: First, we briefly recap the necessary notation and statements in Section \ref{sec:preliminaries}. Then, we give an overview over the main results and some examples in Section \ref{sec:main-results}. Next, we discuss the results and proofs for Trotter-like product formulas in Section \ref{sec:trotter}. Here, we begin with a bound for the Trotter product formula in the strong operator topology and extend this to $C_0$-evolution systems and Suzuki's higher-order approximations. We then apply the Trotter-like results to Sobolev-preserving semigroups. Finally, we generalize the quantum Zeno effect in Section \ref{sec:zeno}, starting with a bound for the projective Zeno effect, followed by an extension to $C_0$-evolution systems and more general measurements. Once again, the results are applied to Sobolev-preserving semigroups, particularly, the $Z(\theta)$-gate in bosonic error correction.
	
\newpage

\section{\bfseries Preliminaries}\label{sec:preliminaries}
	We start  by revisiting certain tools from Banach space theory used in the present paper. More details can be found in \cites[Ch.~III]{Kato.1995}[Ch.~IV]{Conway.2007}[Ch.~2-3]{Simon.2015}[Ch.~2-3]{Hille.1996}. Let $(\cX,\|\cdot\|_{\cX})$ and $(\cY,\|\cdot\|_{\cY})$ denote Banach spaces and $\cB(\cX,\cY)$ the space of bounded linear maps from $\cX$ to $\cY$, which defines a Banach space again if it is equipped with the operator norm:
	\begin{equation}\label{eq:operator-norm}
		\|A\|_{\cX\rightarrow\cY}\coloneqq\sup_{\|x\|_{\cX}=1}\|A(x)\|_{\cY}\qquad\text{for}\,A\in\cB(\cX,\cY)\,.
	\end{equation}
	We denote the identity map in $\cB(\cX,\cY)$ by $\1$ and the space of bounded endomorphisms by $\cB(\cX)$. More generally, an unbounded operator $A:\cD(A)\subset\cX\rightarrow\cX$ is a linear map defined on a domain $\cD(A)$ and is called densely defined if its domain is a dense subspace of $\cX$. The addition and concatenation of two unbounded operators $(A,\cD(A))$ and $(B,\cD(B))$ is defined on $\cD(A+B)=\cD(A)\cap\cD(B)$ and $\cD(AB)=B^{-1}(\cD(A))$. An operator $(A,\cD(A))$ is closed iff its graph $\{(x,A(x))\,|\,x\in\cD(A)\}$ is a closed set in the product space $\cX\cross\cX$. By convention, we extend all densely defined and bounded operators by the bounded linear extension theorem to bounded operators on $\cX$ \cite[Thm.~2.7-11]{Kreyszig.1989}. Here, $\overline{A}$ is an extension of $A$ if $\cD(A)\subset\cD(\overline{A})$ and $Ax=\overline{A}x$ for all $x\in\cD(A)$. A densely defined operator is called closable if there is a closed extension --- its smallest is called closure. Given operators\footnote{As a short reminder, in this work we use the convention that an operator is a linear map on a Banach space.} $(A,\cD(A)), (B,\cD(B))$ on $\cX$, the operator $(B,\cD(B))$ is relatively $A$-bounded if $\cD(B)\subseteq \cD(A)$ and there are $a,b \geq 0$ for all $x\in\cD(A)$ such that
    \begin{equation}\label{eq:relative-bounded}
        \|B(x)\|_{\cX}\leq a\|A(x)\|_{\cX}+b\|x\|_{\cX}\,.
    \end{equation}
	Similarly, we call $(A,\cD(A))$ relatively $\cY$-bounded w.r.t.~a Banach space $\cY\subset\cD(\cA)$ if $(\cY,\|\cdot\|_{\cY})$ is continuously embedded, i.e.~there is a $b\geq0$ such that $\|x\|_{\cX}\leq b\|x\|_{\cY}$ for all $x\in\cY$, and 
	\begin{equation}\label{eq:relative-Y-bounded-op}
		\|A\|_{\cY\rightarrow\cX}<\infty
	\end{equation}
	Besides the convergence w.r.t.~the operator norm (i.e.~uniform convergence), a sequence of operators $\{A_k\}_{k\in\N}$ defined on a common domain $\cD(A)$ converges strongly to $(A,\cD(A))$ if 
	\begin{equation*}
		\lim_{k\rightarrow\infty}\|A_kx-Ax\|_{\cX}=0
	\end{equation*}
	for all $x\in\cD(A)$. Next, specific weighted Banach spaces are introduced, which are important for the examples of this work. Let $(\cW,\cD(\cW))$ be an injective operator, then $\cD(\cW)$ equipped with
	\begin{equation}\label{eq:weighted-norm}
		\|X\|_{\cW}\coloneqq\|\cW(X)\|_{\cX}
	\end{equation}
	defines a normed space, which is additionally complete if $(\cW,\cD(\cW))$ is surjective. This can be seen by the closed Graph theorem, which shows that $\cW$ is a closed operator \cite[Lem.~2.2]{Gondolf.2023}.
	
	In our examples, $\cW$ plays the role of a reference operator which relatively bounds the considered generators and  defines a stable or admissible subspace required for our theorems.

	\subsection{Strongly continuous semigroups}\label{subsec:c0-semigroups}
		A quantum mechanical evolution is often described by a differential equation called the master equation, for which strongly continuous ($C_0$-)semigroups and $(C_0)$-evolution systems constitute an important toolbox for its solution theory. More details can for example be found in \cite[Chap.~II]{Engel.2000}\cite[Chap.~9]{Kato.1995} or \cite[Chap.~X-XIII]{Hille.1996}. A $C_0$-semigroup is defined by a family of operators $(T_t)_{t\geq0}\subset\cB(\cX)$, which satisfies
		\begin{equation}\label{eq:semigroup}
			T_tT_s=T_{t+s},\quad T_0=\1,\quad\text{and}\quad \lim\limits_{t\downarrow 0}T_tx=x
		\end{equation}
		for all $t,s\geq0$ and $x\in\cX$. Equivalently, every semigroup is uniquely determined by its generator \cite[Thm.~II.1.4]{Engel.2000} --- a possible unbounded, densely defined, and closed operator given by
		\begin{equation}
    		\cL(x) = \lim\limits_{t \to 0^+} \frac{1}{t}(T_t(x) - x) \, , 
		\end{equation}
		for any 
		\begin{equation}
			x\in\cD(\cL) = \{x \in \cX : t \mapsto T_t(x) \text{ differentiable on } \R_{\geq0}\}\,.
		\end{equation}
		This justifies the notation $T_t=e^{t\cL}$. Interestingly, one can show that every $C_0$-semigroup satisfies
		\begin{equation}\label{eq:stable-semigroup}
			\|T_t\|_{\cX\rightarrow\cX}\leq ce^{\omega t}
		\end{equation}
		for a $c\geq0$ and $\omega\geq0$ \cite[Prop.~I.5.5]{Engel.2000}, which is referred to as stability of the semigroup. In the special case $c\leq1$ and $\omega=0$, the semigroup is called contractive. Since the semigroup leaves the domain $\cD(\cL)$ invariant and commutes with its generator, the following master equation admits the unique solution $T_t(x(0)) = x(t)$ \cite[Prop.~II.6.2]{Engel.2000}
		\begin{equation}\label{eq:time-indep-master-equation}
    		\frac{d}{dt} x(t) = \cL(x(t)) \quad x(0) \in \cD(\cL) \quad\text{and}\quad t \ge 0 \, . 
		\end{equation}
		The above equation can be integrated in the following sense. For $[a,b]\subset[0,T]$, we denote the Bochner integral (see \cites[Sec.~3.5-8]{Hille.1996}) of a vector-valued function $[0,T]\ni t\mapsto x(t)\in\cX$ by
		\begin{equation*}
			\int_{a}^b x(t)dt\,.
		\end{equation*}
		If the integral is well-defined, the usual properties like linearity, triangle inequality, additivity of disjoint sets, dominated convergence, as well as invariance under closed operators $A$ if $[a,b]\ni t\mapsto Ax(t)\in\cX$ is integrable, i.e. 
		\begin{equation}
			A\int_{a}^bx(t)d\mu=\int_{a}^bAx(t)dt\,,
		\end{equation}
		can be shown \cite[Sec.~3.7]{Hille.1996}. Moreover, the Fundamental theorem of calculus can be proven:
		\begin{equation}\label{eq:fund-thm-calculus}
			x(b)-x(a)=\int_{a}^{b}\frac{d}{dt}x(t)dt\,.
		\end{equation}
		One way to verify that a vector-valued function on a compact interval is integrable is to prove continuity of the function \cite[Thm.~3.7.4]{Hille.1996}. In the present work (see for instance Thm.~\ref{thm:trotter-tele}, \ref{thm:general-zeno}), the vector-valued map is usually given by a semigroup, an evolution system (see next paragraph), or a combination of both $t\mapsto G(t)x$ for $x\in\cD(G(t))$. If the $G$ is not just a semigroup which is by definition strongly continuous (see \cite[Lem.~2.1.3]{Engel.2000}), we present the following slight extension of \cite[Lem.~B.16]{Engel.2000} as an auxiliary tool for the later proofs:
		\begin{lem}\label{lem:product-continuity}
			Let $\cY\subset\cX$ be a continuously embedded subspace of $\cX$, $[0,1]\ni t\mapsto G_1(t)\in\cB(\cX)$ and $[0,1]\ni t\mapsto (G_2(t),\cY)$ be two strongly continuous vector-valued maps. Then
			\begin{equation*}
				[0,1]\ni t\mapsto (G_1(t)G_2(t),\cY)
			\end{equation*}
			is again a strongly continuous vector-valued map on $\cY$. 
		\end{lem}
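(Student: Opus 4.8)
The plan is to reduce the claim to the elementary fact that a product of two strongly continuous operator-valued functions is again strongly continuous, once the left factor is known to stay uniformly bounded. Fix $x\in\cY$ and $t_0\in[0,1]$; the goal is to show that $G_1(t)G_2(t)x\to G_1(t_0)G_2(t_0)x$ in $\cX$ as $t\to t_0$. First I would insert an intermediate term and split
\[
	G_1(t)G_2(t)x-G_1(t_0)G_2(t_0)x=G_1(t)\bigl(G_2(t)x-G_2(t_0)x\bigr)+\bigl(G_1(t)-G_1(t_0)\bigr)G_2(t_0)x\,.
\]
The second summand tends to $0$ as $t\to t_0$ because $z\coloneqq G_2(t_0)x$ is a fixed element of $\cX$ (it lies in $\cY\subset\cX$), so that $t\mapsto G_1(t)z$ is continuous by the assumed strong continuity of $G_1$.

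For the first summand I would estimate
\[
	\|G_1(t)\bigl(G_2(t)x-G_2(t_0)x\bigr)\|_{\cX}\le\|G_1(t)\|_{\cX\to\cX}\,\|G_2(t)x-G_2(t_0)x\|_{\cX}\le b\,\|G_1(t)\|_{\cX\to\cX}\,\|G_2(t)x-G_2(t_0)x\|_{\cY}\,,
\]
where $b$ is the embedding constant of $\cY\hookrightarrow\cX$; the $\cY$-norm on the right vanishes as $t\to t_0$ by strong continuity of $t\mapsto(G_2(t),\cY)$ in the $\cY$-topology. The one ingredient that is not completely formal is the uniform bound $\sup_{t\in[0,1]}\|G_1(t)\|_{\cX\to\cX}<\infty$: for each $x\in\cX$ the set $\{G_1(t)x:t\in[0,1]\}$ is the continuous image of a compact interval, hence norm-bounded, so the uniform boundedness principle (Banach--Steinhaus) upgrades this to the desired bound on the operator norms. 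Combining the two estimates gives $\|G_1(t)G_2(t)x-G_1(t_0)G_2(t_0)x\|_{\cX}\to0$, and since $x\in\cY$ and $t_0$ were arbitrary this is precisely the assertion of \Cref{lem:product-continuity}.

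The main (and really the only) subtlety is this appeal to Banach--Steinhaus to pass from pointwise to uniform boundedness of $G_1(\cdot)$; in all the applications in the paper $G_1$ is a contraction or a contractive $C_0$-evolution system, so there $\|G_1(t)\|_{\cX\to\cX}\le1$ and the step is vacuous. The continuous embedding is exactly what lets the $\cY$-continuity of the right factor be fed into an $\cX$-norm estimate, and the argument is otherwise the one given for \cite[Lem.~B.16]{Engel.2000}.
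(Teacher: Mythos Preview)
Your proof is correct and follows essentially the same approach as the paper: the same add-and-subtract splitting, the same appeal to the uniform boundedness principle for $\sup_t\|G_1(t)\|_{\cX\to\cX}<\infty$, and the same use of strong continuity of each factor. The only cosmetic difference is that the paper bounds $\|G_2(t)x-G_2(t_0)x\|_{\cX}$ directly rather than passing through the $\cY$-norm via the embedding constant, so your detour through $b\|\cdot\|_{\cY}$ is unnecessary (though harmless).
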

		\begin{proof}
			The proof follows the proof of \cite[Lem.~B.15]{Engel.2000}. Let $x\in\cY$, $t\in[0,1]$, and $(t_n)_{n\in\N}$ a sequence converging to $t$. Then, there is an uniform bound on $\|G_1(t_n)\|_{\cX\rightarrow\cX}$ by the uniform boundedness theorem. Therefore, 
			\begin{equation*}
				\begin{aligned}
					\|G_1(t_n)G_2(t_n)x-G_1(t)G_2(t)x\|_{\cX}&=\|G_1(t_n)\|_{\cX\rightarrow\cX}\,\|G_2(t_n)x-G_2(t)x\|_{\cX}\\
					&\qquad+\|(G_1(t_n)-G_1(t))G_2(t)x\|_\cX
				\end{aligned}
			\end{equation*}
			finishes the proof.
		\end{proof}
		
		As  mentioned, we consider so-called evolution systems. These extend the master equation in Equation \ref{eq:time-indep-master-equation} by assuming that the generator is time-dependent. In this case, a two-parameter family of bounded operators $( T_{t,s})_{(t,s)\in I_T}$ with $I_{T}\coloneqq\{(t,s)\,|\,T\geq t\geq s\geq0\}$ is called a $C_0$-evolution system (or time-dependent $C_0$-semigroup) if
		\begin{equation}\label{eq:evolution-sys}
			T_{t,t}=\1\,,\quad T_{t,r}T_{r,s}=T_{t,s}\,,\quad\text{and}\quad (t,s)\mapsto T_{t,s}x\text{ is strongly continuous}
		\end{equation}
		for all $0\leq s\leq r\leq t\leq T$. Clearly, every $C_0$-semigroup $T_t$ can be identified by an evolution system via $T_{t,s}=T_{t-s}$. However, a crucial difference is that evolution systems are not necessarily differentiable for any $x\neq 0$ \cites[p.~478]{Engel.2000}. However, in the present paper all considered evolution systems are defined to be differentiable in the sense that there is a family of generators $(\cL_s,\cD(\cL_s))_{0\leq s\leq T}$ such that for all $x\in\cD(\cL_s)$ and all $(t,s)\in I_T$,
		\begin{equation}\label{eq:evo-derivatives}
			\begin{aligned}
				\frac{\partial}{\partial s}T_{t,s}x=-T_{t,s}\cL_sx\,.
			\end{aligned}
		\end{equation}
		We say that $(\cL_s,\cD(\cL_s))_{0\leq s\leq T}$ generates $T(t,s)$ if it is uniquely determined by the above relation. For a continuously embedded Banach space $(\cY,\|\cdot\|_{\cY})$ in $\cX$, we call a continuously differentiable function $t\mapsto x(t)\in\cY$ for $(t,s)\in I_T$ a $\cY$-solution if it satisfies the differential equation
		\begin{equation}\label{eq:time-dep-master-equation}
    		\frac{\partial}{\partial t}x(t)=\cL_t(x(t))\quad\text{and}\quad x(s)=x_s\in\cY\quad\text{ for }0\leq s\leq t\leq T
		\end{equation}
        for a given family $(\cL_t,\cY\subset\cD(\cL_t))$ with $(t,s)\in I_T$. Conversely, we introduce sufficient assumptions on $(\cL_s,\cD(\cL_s))$ such that the differential Equation \ref{eq:time-dep-master-equation} admits a unique solution operator \cites[Chap.~5]{Pazy.1983}[pp.~127~ff.]{Giuseppe.1976}. For that, we introduce the notion of admissible subspaces:
		\begin{defi}[Stable and admissible subspaces]\label{defi:admissible-spaces}
			Let $(\cY\subset\cX,\|\cdot\|_{\cY})$ be a continuously embedded subspace of $\cX$ and $(\cL,\cD(\cL))$ the generator of a $C_0$-semigroup on $\cX$.
			Then, the subspace $(\cY,\|\cdot\|_{\cY})$ is $\cL$-stable, if it is invariant under the semigroup, i.e.~$e^{t\cL}\cY\subset\cY$ for all $t\geq0$, and the reduced vector-valued function $e^{t\cL}|_\cY:\cY\rightarrow\cY$ is stable, i.e.~there are $\tilde{c},\tilde{\omega}\geq0$ such that 
			\begin{equation*}
				\|e^{t\cL}\|_{\cY\rightarrow\cY}\leq \tilde{c}e^{t\tilde{\omega}}\,.
			\end{equation*}
		 	Moreover, $(\cY,\|\cdot\|_{\cY})$ is $\cL$-admissible if it is $\cL$-stable and the reduced function is strongly continuous. The same definitions translates to $C_0$-evolution systems with time-independent constants. 
		\end{defi}
		\noindent Note that the additional continuity assumption in the definition of $\cL$-admissible subspaces induces, that the reduced vector-valued function, i.e.~$e^{t\cL}|_\cY:\cY\rightarrow\cY$, defines a semigroup again. Here, the stability of the reduced vector-valued map is not required and even implied by \cite[Prop.~I.5.5]{Engel.2000}. However, in the following, we consider families of generators over $[0,T]$ for which we assume to have uniform stability constants $\tilde{c},\tilde{\omega}$ for all $s\in[0,T]$. The describing properties, which refer to evolution systems of hyperbolic type, are:
		\renewcommand{\labelenumi}{(\arabic{enumi})\hspace{1ex}}
		\begin{enumerate}
			\item For every $0\leq s\leq T$, $(\cL_s, \cD(\cL_s))$ generates a $C_0$-semigroup and admits uniform stability constants $c,\omega\geq0$, i.e. $\|e^{t\cL_s}\|_{\cX\rightarrow\cX}\leq ce^{t\omega}$ for all $t\geq0$ and $s\in[0,T]$\,;
			\item There exists a $\cL_s$-admissible subspace $(\cY\subset \cX,\|\cdot\|_{\cY})$ and uniform stability coefficients $\tilde{c},\tilde{\omega}$ for all $0\leq s\leq T$;
			\item The map $s\mapsto\cL_s$ is $\cY$-bounded (see Eq.~\ref{eq:relative-Y-bounded-op}) and continuous in $\cB((\cY,\|.\|_\cY),(\cX,\|.\|_\cX))$.
		\end{enumerate}
		To achieve a $\cY$-solution of the differential equation (\ref{eq:time-dep-master-equation}), one can additionally assume 
		\begin{enumerate}
			\setcounter{enumi}{3}
			\item  $T_{t,s} \cY \subseteq \cY$ for $(t,s)\in I_T$; 
			\item $(s, t) \mapsto T_{s,t}$ is strongly continuous on $(\cY, \norm{\cdot}_{\cY})$\,.
		\end{enumerate}
		Under these assumptions, one can show: 
		\begin{thm}[Hyperbolic evolutions systems \texorpdfstring{\cite[Thm.~5.3.1/4.3]{Pazy.1983}}{???}]\label{thm:time-dependent-semigroups}
			Let $(\cL_s,\cD(\cL_s))_{s\ge 0}$ be a family of operators satisfying assumptions $(1-3)$. Then, there is a unique hyperbolic $C_0$-evolution system on $(\cX,\|\cdot\|_{\cX})$ satisfying
			\begin{equation*}
				\|T_{t,s}\|_{\cX\rightarrow\cX}\leq c\,e^{(t-s)\omega}\,,\quad \lim\limits_{t\downarrow s}\frac{T_{t,s}x-x}{t-s}=\cL_sx\,,\quad\text{and}\quad \frac{\partial}{\partial s}T_{t,s}x=-T_{t,s}\cL_sx
			\end{equation*}
			for all $(t,s)\in I_T$ and $x\in\cY$. If further $(4)$ and $(5)$ hold, then for every $x_s\in \cY$, $T_{t,s}x_s$ is a unique solution for the initial value problem in (\ref{eq:time-dep-master-equation}).
		\end{thm}
		\begin{proof}
			To apply the result \cite[Thm.~5.3.1/4.3]{Pazy.1983}, one can directly translate the stability assumptions given in $(1)-(5)$ via the Feller-Miyadera-Phillips generation theorem (see \cite[Thm.~III.3.8]{Engel.2000}) and \cite[Prop.~I.5.5]{Engel.2000} to the assumptions $H_1,...,H_3$ and $E_4,\,E_5$ given in \cite[p. 135,141]{Pazy.1983} as well as the definition of stability in \cite[Def.~5.2.1]{Pazy.1983}.
		\end{proof}
	
	\subsection{Bosonic Sobolev preserving semigroups}\label{subsec:bosonic-semigroups}
		One of the main examples in our work is the class of bosonic Sobolev preserving semigroups, which are discussed in detail in \cite{Gondolf.2023}. For those, we shortly introduce bosonic (or continuous variable) systems in the second quantization formalism \cite[Chap.~5.2]{Bratteli.1981}. Let $\cH=L^2(\mathbb{R}^m)$ be the Fock space of $m$-modes endowed with an orthonormal (Fock-)basis $\{\ket{\n}\}_{\n\in\N^m}$, where $n_j$ labels the number of particles in the $j^{\text{th}}$-mode. Then, the {annihilation} and {creation} operators are defined by
		\begin{equation*}
			a_j\ket{\n}=\sqrt{n_j}\ket{n_1,...,n_{j}-1,...,n_m},\quad\ad_j\ket{\n}=\sqrt{n_j+1}\ket{n_1,...,n_j+1,...,n_m}
		\end{equation*}
		and satisfy
		\begin{equation*}
			[a_i,\ad_j]\ket{\n}=\delta_{i,j}\ket{\n}\,,\quad[a_i,a_j]\ket{\n}=0
		\end{equation*}
		for all $\n\in\N^m$ and $i,j\in\{1,...,m\}$. Moreover, the number operator is defined by
		\begin{equation}\label{eq:number-operator}
			N_j\ket{\n} = \ad_j a_j\ket{\n} =n_j\ket{\n}
		\end{equation}
		and counts the number of particles in the $j^{\text{th}}$-mode. Another important class are the coherent states, which are eigenvectors of $a_j$, i.e.~$a_j\ket{\alpha}=\alpha_j\ket{\alpha}$ for all $j\in\{1,...,m\}$. Here, the $j^{\text{th}}$-mode is defined by $\ket{\alpha_j}=\exp(-\frac{|\alpha_j|^2}{2})\sum_{j=0}^\infty\frac{\alpha_j^n}{\sqrt{n!}}\ket{n_j}$ for any $\alpha_j\in\C$. In most of our proofs, we use the finitely supported subspace $\cH_f = \{\ket{\psi}= \sum_{\n=0}^M \lambda_{\n}\ket{\n}\,|\,M\in\N\,,\,\n\in\N^m\,,\,\lambda_{\n}\in\C\}$ as a domain for $a_j$, $\ad_j$, and polynomials in both and extent the bound afterwards by continuity in an appropriate topology. 
		
		For $A\in\cB(\cH)$, the adjoint is uniquely defined by $\braket{A\phi\,,\varphi}=\braket{\phi\,,A^\dagger\varphi}$ for all $\phi,\varphi\in\cH$ \cite[Sec.~2.1]{Simon.2015} and the space of self-adjoint operators denoted by $\cB_{\operatorname{sa}}(\cH)\coloneqq\{A\in\cB(\cH)\,|\,A=A^\dagger\}$. Then, we define the trace by $\tr[A]=\sum_{\n\in\N^m}\bra{\n}A\ket{\n}$ for a positive $A\in\cB_{\operatorname{sa}}(\cH)$, i.e.~there is a $B\in\cB(\cH)$ such that $A=B^\dagger B$, and the trace norm by $\|A\|_1=\tr[\sqrt{A^\dagger A}]$. Finally, the space of self-adjoint trace-class operators is denoted by $\cT_{1,\operatorname{sa}}\coloneqq\{A\in\cB_{\operatorname{sa}}(\cH)\,|\,\|A\|_1<\infty\}$. Using the convention $\n\leq M$ iff $n_j\leq M$ for all $j\in\{1,...,m\}$, we define 
		\begin{equation}\label{eq:finite-approx-states}
			\cT_f\coloneqq \{A = \sum_{\n,\m=0}^M a_{\n\m} \ketbra{\n}{\m}\,|\,M\in\N\,,\,\,\n,\m\in\N^m\,,\,\,a_{\n,\m}=\bar{a}_{\m,\n}\in\C\}
		\end{equation}
		as before, which is dense in $\cT_{1,\operatorname{sa}}$ \cite[Sec.~3.1]{Simon.2015}. Then, we define the bosonic Sobolev spaces by
		\begin{equation}\label{eq:bosonic-symmetric-weight}
			\cD(\cW^\k)=\cW^{-\k}(\cT_{1,\,\operatorname{sa}})\quad\text{via}\quad\cW^{-\k}(x)\coloneqq\prod_{j=1}^m(\1+N_j)^{-k_j/4}\,\,x\,\prod_{j=1}^m(\1+N_j)^{-k_j/4}\,.
		\end{equation}
		Since $\cW^{-\k}$ is an injective bounded operator, its inverse $\cW^{\k}$ is a closed operator on $\cD(\cW^\k)$ by the closed graph theorem and $(\cD(\cW^\k), \|\cdot\|_{\cW^\k})$ a Banach space. For the sake of notation, we define
		\begin{equation}\label{eq:sobolev-space}
			W^{\k,1}\coloneqq\cD(\cW^\k)\quad\text{and}\quad\|\cdot\|_{W^{\k,1}}\coloneqq\|\cdot\|_{\cW^\k}
		\end{equation}
		with the special case $(W^{0,1},\|\cdot\|_{W^{0,1}})=(\cT_{1,\,\operatorname{sa}},\|\cdot\|_1)$. Interestingly, one can show that if $0\leq k_j < k_j'$ for all $j\in\{1,...,m\}$, $W^{\k',1}$ is compactly embedded in $W^{\k,1}$ \cite{Gondolf.2023}.

		Next, we discuss a sufficient condition so that an operator $(\cL,\cT_f)$ is a generator. Note that the considered operators are often defined on the domain $\cT_f$ (see Eq.~\ref{eq:finite-approx-states}) and then extended to their closure if possible. First, we assume that the operator is in $\operatorname{GKSL}$ form, i.e.
        \begin{equation}\label{eq:lindblad}
			\cL:\cT_f \to \cT_f \quad x \mapsto\cL(x) = - i [H, x] + \sum\limits_{j = 1}^K L_j x L_j^\dagger  - \frac{1}{2}\{L_j^\dagger L_j, x\}
		\end{equation}
		with $K\in\mathbb{N}$, $L_j\coloneqq p_j(a_1,\ad_1,...)$ defined by polynomials of $a_j$ and $\ad_j$ for all $j\in\{1,...,m\}$, and symmetric $H\coloneqq p_H(a_1,\ad_1,...)$, i.e.~$\braket{A\psi, \phi}=\braket{\psi, A\phi}$ for all $\psi,\phi\in\cH_f$. Here, we assume that the monomials in each mode are ordered like $(\ad_j)^k a_j^l$. Second, we assume that there are non-negative sequences $(\k_r)_{r \in \N}\subset\N^m$ diverging coordinate-vice to infinity and $(\omega_{\k_r})_{r\in\N}$ so that 
		\begin{equation}\label{eq:assumsobolevstability}
			\tr[\cW^{\k_r}\cL(x)] \le \omega_{\k_r} \tr[\cW^{\k_r}(x)]
		\end{equation}
		for all positive semi-definite $x \in \cT_f$. Then, the following theorem holds: 
		\begin{thm}[\texorpdfstring{\cite[Thm.~4.4]{Gondolf.2023}}{???}]\label{thm:multimode-generation-theorem}
			Let $(\cL, \cT_f)$ be an operator of the form (\ref{eq:lindblad}) satisfying Equation \ref{eq:assumsobolevstability}, then the closure $\overline{\cL}$ generates a positivity preserving $C_0$-semigroup $(T_t)_{t\ge 0}$ on $W^{\k_r, 1}$ for $r \in \N$ and
			\begin{equation}\label{eq:sobolev-preserving-ineq}
				\norm{T_t}_{W^{\k_r, 1} \to W^{\k_r, 1}} \le e^{\omega_{\k_r} t}
			\end{equation}
			for all $t\geq 0$. In the special case $\k = 0$, the semigroup is contractive and trace-preserving.
		\end{thm}
		The same holds true if the coefficients of the operator are time-dependent but continuous and if the right-hand side in Equation \ref{eq:assumsobolevstability} does not depend on $t$ \cite[Thm.~4.5]{Moebus.2023}.
		
\section{\bfseries Main results}\label{sec:main-results}
	This section presents the main results of the paper and discusses its assumptions. The paper is split into two parts: The first is about the Trotter product formula and the second about the quantum Zeno effect. We start with the former and a generalization \citeauthor{Trotter.1959} published in \citeyear{Trotter.1959} of Lie's product formula for matrices to contractive $C_0$-semigroups on a Banach space $\cX$ generated by $(\cL,\cD(\cL))$ and $(\cK,\cD(\cK))$. In this section, we assume that $(\cL,\cD(\cL))$ and $(\cK,\cD(\cK))$ generates a contractive $C_0$-semigroup, without explicitly mentioning. Moreover, assume that the closure of $\cL+\cK$ generates a contractive $C_0$-semigroup. Then, \cite{Trotter.1959} shows that for all $x\in\cX$
	\begin{equation}\label{eq:lie-trotter}
		\left(e^{\frac{\cL}{n}}e^{\frac{\cK}{n}}\right)^nx\rightarrow e^{\cL+\cK}x\qquad\text{for}\qquad n\rightarrow\infty\,.
	\end{equation}
	For our results, we need two main assumptions. First, we assume that there is a (reference) Banach space $(\cY\subset\cX,\|\cdot\|_{\cY})$ on which the operators $\cL$ and $\cK$ as well as all products are not only well-defined but also relatively $\cY$-bounded. Repeat Equation \ref{eq:relative-Y-bounded-op}: an operator $(A,\cD(A))$ is relatively $\cY$-bounded if $(\cY,\|\cdot\|_{\cY})$ is continuously embedded in $(\cX,\|\cdot\|_{\cX})$ and 
	\begin{equation}\label{assum1:relative-boundedness}
		\|A\|_{\cY\rightarrow\cX}<\infty\,.
	\end{equation}
	Second, we assume that the closure of $(\cL+\cK)$ generates a contractive $C_0$-semigroup on $\cX$ and a stable vector-valued map on $\cY$ --- $\cY$ is $(\cL+\cK)$-stable --- which means $e^{t(\cL+\cK)}(\cY)\subset\cY$ and	there are a $\tilde{c},\omega\geq 0$ so that
	\begin{equation}\label{assum2:stability}
		\|e^{t(\cL+\cK)}\|_{\cY\rightarrow\cY}\leq \tilde{c}e^{t\omega}\,.
	\end{equation}
	Before diving into the first result, we briefly discuss the motivation behind the above assumptions. These assumptions embody the essential steps used in the current proof strategy. For instance, they cover the following examples. First, we consider an operator-theoretical example assuming certain relative boundedness assumption between the involved generators. 
	\begin{ex}\label{ex:back-to-operator-rel-boundedness}
		Assume that the closure of $(\cL+\cK,\cD(\cL+\cK))$ defines a contractive $C_0$-semigroup again and that $\cK^2$, $\cL\cK$, $\cK \cL$, and $\cL^2$ are relatively $(\overline{\cL+\cK})^2$-bounded. Then, we equip $\cY=\cD((\overline{\cL+\cK})^2)$ with the norm
		\begin{equation*}
			\|x\|_{\cY}\coloneqq \|x\|_{\cX}+\|(\overline{\cL+\cK})x\|_{\cX}+\|(\overline{\cL+\cK})^2x\|_{\cX}
		\end{equation*}
		for $x\in\cY$, which defines a Banach space by repeating the reasoning why $\cD(\cL)$ with the graph norm defines a Banach space twice \cite[Sec.~5.2]{Kato.1995}. Specially, it defines an $(\cL+\cK)$-admissible subspace. Therefore, it satisfies the assumptions Assumption \ref{assum1:relative-boundedness} and \ref{assum2:stability} with $\omega=0$ so that the bound does not depend exponentially on $t$. Note that the above construction is equivalent to the so-called Sobolev Tower of order $2$ \cite[Sec.~5]{Engel.2000} if $\overline{\cL+\cK}$ is invertible.
	\end{ex}
	Second, we consider a bosonic example of decoupled quadratic generators: 
	\begin{ex}\label{ex:bosonic-quadratic-generators}
		Let $(H_1,\cH_f)$ and $(H_2,\cH_f)$ be two decoupled quadratic Hamiltonian in annihilation and creation operators defined on an $m$-mode bosonic quantum system admitting the structure
        \begin{equation}\label{eq-main:quadratic-hamiltonian}
			H=\sum_{i\in\{1,..,m\}}\sum_{k_i+\ell_i\leq2} \lambda_{k_i\ell_i}(\ad_i)^{k_i}a_i^{\ell_i}+p(N_1,...,N_m)
		\end{equation}
        Then, it is shown in \cite[Lem.~30]{Moebus.2023Learning} that there is a constant $\tilde{c}$ such that for all $x\in\cD(W^{\k,1})$
		\begin{equation*}
			\|-i[H_j,x]\|_1\leq\tilde{c}\|\cW^4(x)\|_1\qquad\text{and}\qquad \|-[H_i[H_j,x]]\|_1\leq\tilde{c}\|\cW^{8}(x)\|_1\,,
		\end{equation*}
		where $W^{k,1}=\prod_{j=1}^m(\1+N_j)^{k/4}\,\,x\,\prod_{j=1}^m(\1+N_j)^{k/4}$ and $i,j\in\{1,2\}$. Following the proof of Lemma 5.6 and applying Theorem 4.4 (restated in Thm.~\ref{thm:multimode-generation-theorem}) in \cite{Gondolf.2023}, the closures of the Hamiltonians define quantum Markov semigroups with the property
		\begin{equation*}
			\|e^{-it[H_j,\cdot]}x\|_{W^{k, 1}}\leq e^{\omega t}\|x\|_{W^{k, 1}} 
		\end{equation*}
		for a constant $\tilde{\omega}\geq0$, $k\in\{4,8\}$, $j\in\{1,2\}$. Here, we used the notation for Sobolev spaces introduced in Equation \ref{eq:sobolev-space}. Therefore, the above assumptions \ref{assum1:relative-boundedness} and \ref{assum2:stability} are satisfied for $(\cY,\|\cdot\|_{\cY})=(W^{4,1},\|\cdot\|_{W^{4,1}})$. More details and open bosonic quantum systems are given in Section \ref{subsec:bosonic-trotter}.
	\end{ex}
	Keeping these two examples in mind, we prove an explicit convergence rate for the Trotter product formula in the strong-operator topology.
	\begin{thm*}[See Theorem \ref{thm:trotter-tele}]
		Let $(\cY\subset\cX,\|\cdot\|_{\cY})$ be a Banach space such that $\cY\subset\cD(\cL+\cK)$ and the closure of $(\cL+\cK,\cY)$ generates a contractive $C_0$-semigroup on $\cX$. If additionally $\cK$, $\cL$, $\cK^2$, $\cL\cK$, $\cK\cL$, $\cL^2$ are relatively $\cY$-bounded (see Eq.~(\ref{eq:relative-Y-bounded-op})) and  $(\cY,\|\cdot\|_{\cY})$ is $(\cL+\cK)$-stable, then there are $c,\omega\geq0$ such that for all $x\in\cY$, $t\geq0$, and $n\in\N$
		\begin{equation*}
			\begin{aligned}
				\norm{\left(e^{\frac{t}{n}\cL}e^{\frac{t}{n}\cK}\right)^nx-e^{t(\cL+\cK)}x}_{\cX}&\leq c\frac{t^2}{n}e^{t\omega}\|x\|_{\cY}\,.
			\end{aligned}
		\end{equation*}
	\end{thm*}
	With respect to Example \ref{ex:bosonic-quadratic-generators}, we verified that two decoupled bosonic quadratic Hamiltonian satisfy Trotter's product formula with the convergence rate $n^{-1}$ in the strong operator topology.  
	
	\begin{rmk*}[Variation of time-steps]
		Note that the proof of the above theorem (also the following theorems) holds even for time-steps that are not equally distributed, thanks to the generality of Proposition \ref{prop:telescopic-sum} and the semigroup property of the limiting semigroup. In this context, it suffices for the largest time-step to converge to zero asymptotically faster than order  $\frac{1}{\sqrt{n}}$ in the strong topology. This directly affects the overall convergence rate of the product formula. 
	\end{rmk*}
	
	In the next step, we improve the convergence rate similar to the matrix, in which the symmetrized Lie product formula demonstrates
	\begin{equation}\label{eq:symmetrized-Lie}
		\|\left(e^{\frac{\cL}{2n}} e^{\frac{\cK}{n}} e^{\frac{\cL}{2n}}\right)^n-e^{\cL+\cK}\|_{\cX\rightarrow\cX}=\cO\Bigl(\frac{1}{n^2}\Bigr)
	\end{equation}
	for matrices $\cL,\cK\in\cB(\cX)$ on a finite dimensional Banach space $\cX$. Here, the analysis to achieve a convergence rate of order $n^{m}$ for an $m\in\N$ can be reduced to the bound 
	\begin{equation}\label{eq:suzuki-assumption}
		\|F(t)-e^{t(\cL+\cK)}\|_{\cY\rightarrow\cX}\leq b t^{m+1}\,,
	\end{equation}
	where $F(t)\coloneqq e^{tp_1\cL}e^{tp_2\cK}e^{tp_3\cL}\cdots e^{tp_{m-1}\cL}e^{tp_m\cK}$ with coefficients $p_1,...,p_m\in\R$ (see Lem.~\ref{lem:general-suzuki}). In the special case of the symmetrized Trotter product formula, explicit upper bounds on Equation \ref{eq:suzuki-assumption} are proven to apply Lemma \ref{lem:general-suzuki}: 
	\begin{thm*}[See Theorem \ref{thm:symmetrized-Lie}]
		Let $(\cY\subset\cX,\|\cdot\|_{\cY})$ be a Banach space such that the closure of $(\cY,\cL+\cK)$ generates a contractive $C_0$-semigroup in $\cX$. If additionally all products of $\cK$ and $\cL$ up to the third power\footnote{$\cL$, $\cK$, $\cL^2$, $\cK^2$, $\cL\cK$, $\cK\cL$, $\cK^3$, $\cK^2\cL$, $\cK\cL^2$, $\cL^3$, $\cL^2\cK$, $\cK\cL\cK$, $\cL \cK^2$, $\cL\cK\cL$} are relatively $\cY$-bounded and $(\cY,\|\cdot\|_{\cY})$ is $(\cL+\cK)$-stable, then for all $x\in\cY$ and $t\geq0$ there are $c,\omega\geq0$ so that
		\begin{equation*}
			\begin{aligned}
				\norm{\left(e^{\frac{t}{2n}\cK}e^{\frac{t}{n}\cL}e^{\frac{t}{2n}\cK}\right)^nx-e^{t(\cL+\cK)}x}_{\cX}&\leq c\frac{t^3}{n^2}e^{t\omega}\|x\|_{\cY}\,.
			\end{aligned}
		\end{equation*}
	\end{thm*}
	One observation of the proof is that higher-order convergence rates need higher order derivatives and thereby a higher regularity of the input state. Utilizing the fact that in \cite[Lem.~30]{Moebus.2023Learning} any polynomial in annihilation and creation operator can be bounded by a $\cW^{k}$ for $k\in\N$, the Trotter-Suzuki  formula \cite{Suzuki.1993} can be applied assuming that the input state is in a regular enough Sobolev space.  However, one has to keep in mind that the evolution needs to be time-reversible for $m\geq3$ (see Eq.~\ref{eq:trotter-suzuki}) as shown in \cite{Suzuki.1991}. 
	
	Finally, we extend the product formula to the time-dependent case. Here, the ideas are roughly the same by exploiting the assumption of a subspace on which the evolution defines a stable function. The limit under consideration is 
	\begin{equation}\label{eq:trotter-evolutions-sys}
		\prod_{j=1}^n U(s_{j},s_{j-1})V(s_{j},s_{j-1})\rightarrow T(t,s)\qquad\text{for}\qquad n\rightarrow\infty\,,
	\end{equation}
	where $0\leq s=s_0<...<s_n=t$ is a partition of $[s,t]$, $U(t,s)$ and $V(t,s)$ are two $C_0$-evolution systems with generator families $(\cL_s,\cD(\cL_s))$ and $(\cK_s,\cD(\cK_s))$, and appropriate assumptions.
	\begin{thm*}[See Theorem \ref{thm:trotter-time-indep}]
		Let $U(t,s)$ and $V(t,s)$ be contractive $C_0$-evolution systems for $(t,s)\in I_T$ and with generator families $(\cL_t,\cD(\cL_t))_{t\in[0,T]}$ and $(\cK_t,\cD(\cK_t))_{t\in[0,T]}$. Moreover, assume that there is a Banach space $(\cY\subset\cX,\|\cdot\|_{\cY})$ so that the closure of $(\cL_t+\cK_t,\cY)_{t\in[0,T]}$ generates a unique contractive $C_0$-evolution system $T(t,s)$ on $\cX$ for $(t,s)\in I_T$. If $\cL_s$, $\cK_s$, $\cL_s\cL_t$, $\cK_s\cL_t$, $\cL_s\cK_t$, $\cK_s\cK_t$ are well-defined on $\cY$, relatively $\cY$-bounded, and $(\cY,\|\cdot\|_{\cY})$ is $(\cL_t+\cK_t)$-stable uniformly for all $(s,t)\in I_T$, in particular $\|T(t,s)\|_{\cY\rightarrow\cY}\leq \tilde{c}e^{(t-s)\omega}$, then there is a $c\geq0$ so that
		\begin{equation*}
			\begin{aligned}
				\norm{\biggl(\prod_{j=1}^n U(s_{j},s_{j-1})V(s_{j},s_{j-1})\biggr)x-T(t,0)x}_{\cX}&\leq c\frac{t^2}{n}e^{t\omega}\|x\|_{\cY}
			\end{aligned}
		\end{equation*}
		for all $x\in\cY$, $t\in[0,T]$, and $s_0=0<\frac{t}{n}<...<\frac{(n-1)t}{n}<t=s_n$.
	\end{thm*}
	\noindent Example \ref{ex:bosonic-quadratic-generators} can be extended to time-dependent Hamiltonians defined by decoupled quadratic Hamiltonians with coefficients depending continuously on time:
 
	\begin{ex}\label{ex:bosonic-quadratic-generators-time-dep}
		Let $s\geq0$ and $(H_1(t),\cH_f)_{t\in[s,T]}$ and $(H_2(t),\cH_f)_{[s,T]}$ be two decoupled quadratic Hamiltonians of the form Equation \ref{eq-main:quadratic-hamiltonian} with time-dependent coefficients on an $m$-mode bosonic quantum system. Then, the same bounds as in Example \ref{ex:bosonic-quadratic-generators} hold true by \cite[Lem.~30]{Moebus.2023Learning} and the closures of $(-i[H_1(t),\cdot],\cT_f)_{t\in[s,T]}$, $(-i[H_2(t),\cdot],\cT_f)_{t\in[s,T]}$ define a Sobolev preserving quantum evolution system \cite[Lem.~5.6, Thm.~4.5]{Gondolf.2023}. Moreover, the assumptions for the above theorem is satisfied (more details and open bosonic quantum systems are given in Section \ref{subsec:bosonic-trotter}).
	\end{ex}

	Similar to the Trotter product formula, the quantum Zeno effect describes an operator product of a contractive projection $P\in\cB(\cX)$, i.e.~$P^2=P$, and a $C_0$-semigroup $e^{t\cL}$:
	\begin{equation}\label{eq:zeno}
		(Pe^{\frac{\cL}{n}})^n\rightarrow e^{P\cL P}P\qquad\text{for}\quad n\rightarrow\infty\,.
	\end{equation}
	Under certain assumptions the convergence as well as the convergence rate can be achieved in the strong operator topology \cite{Moebus.2023,Becker.2021}. We start with the projective Zeno effect without assuming the asymptotic Zeno condition \cite[Eq.~6]{Moebus.2023}. Again, we assume in the following that $(\cL,\cD(\cL))$ is a generator of a contractive $C_0$-semigroup and $P$ a projective contraction.
	\begin{prop}[See Proposition \ref{prop:simple-zeno}]
		Assume there is a Banach space $(\cY,\|\cdot\|_{\cY})$ invariant under $P$ such that the closure of $(P\cL P,\cY)$ defines a contractive $C_0$-semigroup, $\cL P$ and $\cL^2P$ are relatively $\cY$-bounded, and $(\cY,\|\cdot\|_{\cY})$ is a $P\cL P$-stable subspace, in particular $\|e^{tP\cL P}\|_{\cY\rightarrow\cY}\leq \tilde{c}e^{t\omega}$. Then, there is a $c_1\geq0$ such that for all $t\geq0$, $x\in\cY$, and $n\in\N$
		\begin{equation*}
			\begin{aligned}
				\|\left(Pe^{t\frac{\cL}{n}}P\right)^nPx-e^{tP\cL P}Px\|_{\cX}&\leq\frac{c_1t^2}{n}\left(e^{t\omega}\|Px\|_{\cY}+\|(P\cL P)^2x\|_{\cX}\right).
			\end{aligned}
		\end{equation*}
		If we additionally assume that $(P\cL P)^2$ is relatively $\cY$-bounded, then there is a $c_2$ such that
		\begin{equation*}
			\begin{aligned}
				\|\left(Pe^{\frac{\cL}{n}}P\right)^nPx-e^{P\cL P}Px\|_{\cX}&\leq\frac{c_2(1+e^{t\omega})}{n}\|Px\|_{\cY}\,.
			\end{aligned}
		\end{equation*}
	\end{prop}
	This abstract result can be applied to verify the construction of the $X(\theta)$-gate of the bosonic CAT code \cite{Mirrahimi.2014}.
	\begin{ex}\label{ex:bosonic-gate-zeno}
		In 1-mode, we define the Schrödinger CAT-states by 
		\begin{equation*}
			\ket{CAT_{\alpha}^+}\coloneqq \frac{1}{\sqrt{2(1+e^{2|\alpha|^2})}}\left(\ket{\alpha}+\ket{-\alpha}\right)\quad\text{and}\quad\ket{CAT_{\alpha}^-}\coloneqq \frac{1}{\sqrt{2(1-e^{2|\alpha|^2})}}\left(\ket{\alpha}-\ket{-\alpha}\right)\,,
		\end{equation*} 
		which are orthonormal and play the role of the logical qubits in the bosonic CAT code. Then, the projections onto the code space spanned by $\ket{CAT_{\alpha}^+}$ and $\ket{CAT_{\alpha}^-}$ is denoted by $P_\alpha=\ket{CAT^+_{\alpha}}\bra{CAT^+_{\alpha}}+\ket{CAT^-_{\alpha}}\bra{CAT^-_{\alpha}}$. Moreover, the logical $X$ gate is  $X_\alpha=\ket{CAT^+_{\alpha}}\bra{CAT^-_{\alpha}}+\ket{CAT^-_{\alpha}}\bra{CAT^+_{\alpha}}$ so that the rotation around the $X$-axis is described by
		\begin{equation*}
			X(\theta)=\cos(\frac{\theta}{2})(P^+_{\alpha}+P^-_{\alpha})+i\sin(\frac{\theta}{2})X_{\alpha}\,.
		\end{equation*}
		Due to the following identity for $H=a+\ad$
		\begin{equation}
			P_\alpha HP_\alpha=(\alpha+\alpha^\dagger) X_{\alpha}\,,
		\end{equation}
		the above rotation can be achieved by the Zeno effect with $\cP_\alpha(\cdot)=P_\alpha\cdot P_\alpha$
		\begin{equation*}
			\|\left(\cP_\alpha e^{-i\frac{t}{n}[H,\cdot]}\cP_\alpha\right)^nx-e^{-it[P_\alpha HP_{\alpha},\cdot ]}\cP_\alpha x\|_{1\rightarrow1}=\cO\Bigl(\frac{t^2}{n}\Bigr)
		\end{equation*}
		because 
		\begin{equation*}
			e^{tiP_\alpha HP_\alpha}P_\alpha=\cos(t(\alpha+\alpha^\dagger))P_\alpha+i\sin(t(\alpha+\alpha^\dagger))X_{\alpha}\,.
		\end{equation*} 
	\end{ex}
	Finally, we extend the result to more general contractions $M$ as well as contractive evolution systems similarly to the case of bounded generators considered in \cite{Mobus.2019}. This extends the result in \cite{Moebus.2023} to evolution systems.
	\begin{thm*}[See Theorem \ref{thm:general-zeno}]
		Let $I_T\ni(t,s)\mapsto V(t,s)\in\cB(\cX)$ be a $C_0$-evolution system with generator family $(\cL_t,\cD(\cL_t))_{t\in[0,T]}$, $M\in\cB(\cX)$ a contraction, and $P$ a projection satisfying 
		\begin{equation*}
			\norm{M^n-P}_{\cX\rightarrow\cX}\leq\delta^n
		\end{equation*}
		for $\delta\in(0,1)$ and all $n\in\N$.  Moreover, we assume the asymptotic Zeno condition:
		\begin{equation*}
			\norm{PV(t,s)(\1-P)}_{\cX\rightarrow\cX}\leq (t-s)b\quad\text{and}\quad\norm{(\1-P) V(t,s)P}_{\cX\rightarrow\cX}\leq (t-s)b
		\end{equation*}
		for $b\geq0$ and all $t\in[0,T]$. Furthermore, there is a Banach space $(\cY\subset\cX,\|\cdot\|_{\cY})$ invariant under $P$ so that the closure of $(P\cL_t P,\cY)_{t\in[0,T]}$ generates a unique contractive $C_0$-evolution system $T(t,s)$ for $(t,s)\in I_T$ commuting with $P$. Moreover, assume that $\cL_tP$, $P\cL_tP$, and $P\cL_tP\cL_sP$ are well-defined on $\cY$ and relatively $\cY$-bounded for all $(t,s)\in I_T$, and $(\cY,\|\cdot\|_{\cY})$ is a $P\cL_t P$-stable subspace, in particular $\|T(t,s)\|_{\cY\rightarrow\cY}\leq \tilde{c}e^{(t-s)\omega}$. Then, there is a $c\geq0$ so that
		\begin{equation*}
			\begin{aligned}
				\Bigl\|\prod_{j=1}^n MV(s_{j},s_{j-1})x-T(t,0)Px\Bigr\|_{\cX}\leq c\frac{t^2}{n}\left(\|x\|_{\cX}+e^{t\omega}\|Px\|_{\cY}\right)
			\end{aligned}
		\end{equation*}
		for all $t\geq0$, $s_0=0<\frac{t}{n}<...<\frac{t(n-1)}{n}<t=s_n$, $x\in\cY$ and $n\in\N$.
	\end{thm*}

\section{\bfseries Trotter-like product formulas}\label{sec:trotter}
	In this section, we discuss quantitative bounds for Trotter-like product formulas. We begin with the fundamental Trotter product formula \cite{Trotter.1959} and enhance the convergence rate through the Suzuki approach \cite{Suzuki.1997}. Additionally, we generalize the involved semigroups to evolution systems. Finally, we delve into bosonic examples, in particular form bosonic Hamiltonian learning, as well as error correction. Most of our proofs rely on a telescopic sum approach known for example from proving the Lie-Trotter product formula \cite{Reed.1980,Zagrebnov.2019} and generalized in the following.
	\begin{lem}\label{prop:telescopic-sum}
		Let $I_T\ni(t,s)\mapsto F(t,s)\in\cB(\cX)$ be a vector-valued map with $\|F(t,s)\|_{\cX\rightarrow\cX}\leq1$ and $F(s,s)=\1$ for all $(t,s)\in I_{T}\coloneqq\{(t,s)\,|\,T\geq t\geq s\geq0\}$. Moreover, let $T(t,s)$ be an evolution system, then for all $(t,s)\in I_{T}$
		\begin{equation*}
			\|\prod_{j=1}^n F(s_{j},s_{j-1})x-T(t,s)x\|_{\cX}\leq n\max_{j\in\{1,...,n\}}\left\{\|\left(F(s_{j},s_{j-1})-T(s_{j},s_{j-1})\right)T(s_{j-1},s_{0})x\|_{\cX}\right\}\,,
		\end{equation*}
		for any increasing sequence $s=s_0<s_1<...<s_{n-1}<s_n=t$. Note that the product over $F(s_j,s_{j-1})$ is in decreasing order from the left to the right.
	\end{lem}
	\begin{proof}
		The statement directly follows by the following telescopic sum \cite{Reed.1980,Zagrebnov.2019}:
		\begin{equation}\label{eq:telescopic-sum}
			\begin{aligned}
				\prod_{j=1}^n F_jx-\prod_{j=1}^{n} T_jx&=\sum_{k=1}^n\left(\prod_{j=k+1}^{n} F_j\right)(F_{k}-T_{k})\left(\prod_{j=1}^{k-1} T_j\right)x\,,
			\end{aligned}
		\end{equation}
		where the products are in decreasing order from the left to the right. Then, the definition $T_j=T(s_j,s_{j-1})$, the semigroup property, i.e.
		\begin{equation*}
			T(t,s)=T(s_n,s_{n-1})\circ T(s_{n-1},s_{n-2})\circ\cdots\circ T(s_{1},s_{0})\eqqcolon \prod_{j=1}^n T_j\,,
		\end{equation*}
		and the contractivity of $F(t,s)$ combined with the submultiplicativity of the norm finish the proof.
	\end{proof}
	\subsection{Trotter product formula}\label{subsec:trotter-product}
		In the following, we consider a set of assumptions on the generators $(\cL,\cD(\cL))$ and $(\cK,\cD(\cK))$ such that a quantitative error bound on Equation \ref{eq:lie-trotter} can be found.
		\begin{thm}\label{thm:trotter-tele}
			Let $(\cL,\cD(\cL))$ and $(\cK,\cD(\cK))$ be generators of two contractive $C_0$-semigroups and $(\cY\subset\cX,\|\cdot\|_{\cY})$ a Banach space such that $\cY\subset\cD(\cL+\cK)$ and\,\footnote{In the following, we omit the assumption $\cY\subset\cD(\cL+\cK)$ because it is clear by the notation $(\cL+\cK,\cY)$} the closure of $(\cL+\cK,\cY)$ generates a contractive $C_0$-semigroup on $\cX$. If additionally $\cK$, $\cL$, $\cK^2$, $\cL\cK$, $\cK\cL$, $\cL^2$ are relatively $\cY$-bounded (see Eq.~(\ref{eq:relative-Y-bounded-op})) and  $(\cY,\|\cdot\|_{\cY})$ is $(\cL+\cK)$-stable, in particular $\|e^{t(\cL+\cK)}\|_{\cY\rightarrow\cY}\leq \tilde{c}e^{t\omega}$, then there is a $c\geq0$ such that for all $x\in\cY$, $t\geq0$, and $n\in\N$
			\begin{equation*}
				\begin{aligned}
					\norm{\left(e^{\frac{t}{n}\cL}e^{\frac{t}{n}\cK}\right)^nx-e^{t(\cL+\cK)}x}_{\cX}&\leq c\frac{t^2}{n}e^{t\omega}\|x\|_{\cY}\,.
				\end{aligned}
			\end{equation*}
		\end{thm}
		\begin{proof}
			By rescaling the generators, assume $0\leq s\leq t\leq 1$. Then, we apply Proposition \ref{prop:telescopic-sum} to 
			\begin{equation*}
				F(t,s)=e^{(t-s)\cL}e^{(t-s)\cK}\quad T(t,s)=e^{(t-s)(\cL+\cK)}
			\end{equation*}
			w.r.t.~the sequence $s_0=0<\frac{1}{n}<...<\frac{n-1}{n}<1$ so that
			\begin{equation*}
				\|\left(e^{\frac{\cL}{n}}e^{\frac{\cK}{n}}\right)^nx-e^{\cL+\cK}x\|_{\cX}\leq n \max_{s\in[0,1-1/n]}\|\left(e^{\frac{\cL}{n}}e^{\frac{\cK}{n}}-e^{\frac{\cL+\cK}{n}}\right)e^{s(\cL+\cK)}x\|_{\cX}\,.
			\end{equation*}
			Then, the integral equation (\ref{eq:fund-thm-calculus}) shows for $x_s\coloneqq e^{s(\cL+\cK)}x$
			\begin{equation*}
				\begin{aligned}
					e^{\frac{\cL}{n}}e^{\frac{\cK}{n}}x_s-e^{\frac{\cL+\cK}{n}}x_s&=(e^{\frac{\cL}{n}}-\1)x_s+e^{\frac{\cL}{n}}(e^{\frac{\cK}{n}}-\1)x_s-(e^{\frac{\cL+\cK}{n}}-\1)x_s\\
					&=\frac{1}{n}\int_0^1\left(e^{s_1\frac{\cL}{n}}\cL+e^{\frac{\cL}{n}}e^{s_1\frac{\cK}{n}}\cK-e^{s_1\frac{\cL+\cK}{n}}(\cL+\cK)\right)x_sds_1\\
					&=\frac{1}{n^2}\int_0^1\int_0^1\Bigl(s_1e^{s_1s_2\frac{\cL}{n}}\cL^2+e^{s_2\frac{\cL}{n}}\cL\cK+s_1e^{\frac{\cL}{n}}e^{s_1s_2\frac{\cK}{n}}\cK^2\\
					&\qquad\qquad\qquad-s_1e^{s_1s_2\frac{\cL+\cK}{n}}(\cL+\cK)^2\Bigr)x_sds_2ds_1\,.
				\end{aligned}
			\end{equation*}
			The integral on the right-hand side is well-defined due to the continuity of the integrand in $s_1$ and $s_2$ given by the assumptions on the domain, i.e.~$x_s\in\cY$. Moreover, the assumption that $\cL^2$ is $\cY$-bounded and $\cY$ $(\cK+\cL)$-admissible induces constants $a_{\cL^2},\tilde{c},\tilde{\omega}\geq0$ satisfying
			\begin{equation*}
				\|e^{s\frac{\cL}{n}}\cL^2x_s\|_{\cX}\leq\|\cL^2x_s\|_{\cX}\leq a_{\cL^2}\|x_s\|_{\cY}\leq a_{\cL^2}\tilde{c}e^{s\tilde{\omega}}\|x\|_{\cY}\,.
			\end{equation*}
			The same calculation holds for all other terms of the integrand in the last line with appropriate constants $a_{\cL\cK},a_{\cK^2},a_{\cK\cL}\geq0$ defined by the $\cY$-boundedness. This implies  
			\begin{equation*}
				\begin{aligned}
					\|\left(e^{\frac{\cL}{n}}e^{\frac{\cK}{n}}\right)^nx-e^{\cL+\cK}x\|_{\cX}&\leq\frac{1}{n}\biggl(a_{\cL^2}+\frac{3}{2}a_{\cL\cK}+a_{\cK^2}+a_{\cK\cL}\biggr) \tilde{c}e^{\tilde{\omega}}\|x\|_{\cY}\,.
				\end{aligned}
			\end{equation*}
			and finishes the proof of the theorem by $t^2c=\bigl(a_{\cL^2}+\frac{3}{2}a_{\cL\cK}+a_{\cK^2}+a_{\cK\cL}\bigr) \tilde{c}$ and $t\omega =\tilde{\omega}$. Here, $\cL$, $\cK$ were rescaled to $t\cL$, $t\cK$ so that $a_{A}$ becomes $t^2a_{A}$ for all $A\in\{\cL^2, \cK^2, \cK\cL, \cL\cK\}$. 
		\end{proof}
		\begin{rmk*}
			Note that in the above proof the assumption that $(\cY,\|\cdot\|_{\cY})$ is a $(\cK+\cL$)-stable Banach space can be weakened to a normed space invariant under the $C_0$-semigroup $e^{t(\cL+\cK)}$ satisfying $\|e^{t(\cL+\cK)}\|_{\cY\rightarrow\cY}\leq \tilde{c}e^{t\omega}$. Together with the other assumptions, the same poof works.
		\end{rmk*}
		
	\subsection{Trotter-Suzuki product formula}\label{subsec:trotter-suzuki}
		Reconsidering the proof strategy of Theorem \ref{thm:trotter-tele}, improving the convergence rate of the product formula reduces to finding an appropriate $F(t)$ so that
		\begin{equation}\label{eq:trotter-suzuki}
			\|F\Bigl(\frac{1}{n}\Bigr)-e^{\frac{\cL+\cK}{n}}\|_{\cX\rightarrow\cX}=\cO\Bigl(\frac{1}{n^{m+1}}\Bigr)\,.
		\end{equation}
	 	In the case of the symmetrized Lie product formula $m=2$ and $F(t)\coloneqq e^{t\frac{\cL}{2}} e^{t\cK} e^{t\frac{\cL}{2}}$ (see Thm.~\ref{thm:symmetrized-Lie}). Following this idea, we assume the structure
	 	\begin{equation*}
	 		F(t)\coloneqq e^{tp_1A}e^{tp_2B}e^{tp_3A}\cdots e^{tp_{m-1}A}e^{tp_mB}
	 	\end{equation*}
	 	for $p_1,...,p_m\in\R$. For that, \citeauthor{Suzuki.1976} proved different schemes to achieve higher order convergence rates (see \cite{Hatano.2005} for an overview) for bounded generators. In the following, we consider a general framework and explicitly demonstrate the symmetrized product formula afterwards.
	 	\begin{lem}\label{lem:general-suzuki}
	 		Let $(\cL,\cD(\cL))$ and $(\cK,\cD(\cK))$ be generators of two contractive $C_0$-semigroups, $m\in\N_{\geq1}$, and $(\cY\subset\cX,\|\cdot\|_{\cY})$ a Banach space such that the closure of $(\cL+\cK,\cY)$ generates a $C_0$-semigroup on $\cX$. Moreover, we assume there exists a contractive and strongly continuous map $t\mapsto F(t)$ for $t\geq0$ such that  
	 		\begin{equation}\label{eq:assumption-suzuki-boundedness}
	 			\|F(t)-e^{t(\cL+\cK)}\|_{\cY\rightarrow\cX}\leq b t^{m+1}\,.
	 		\end{equation}
	 		If $(\cY,\|\cdot\|_{\cY})$ is $(\cL+\cK)$-stable, in particular $\|e^{t(\cL+\cK)}\|_{\cY\rightarrow\cY}\leq \tilde{c}e^{t\omega}$, then for all $x\in\cY$ and $t\geq0$
	 		\begin{equation*}
	 			\norm{\left(F\Bigl(\frac{t}{n}\Bigr)\right)^nx-e^{t(\cL+\cK)}x}_{\cX}\leq \frac{t^{m+1}}{n^m}\tilde{c}be^{t\omega}\|x\|_{\cY}\,.
	 		\end{equation*}
	 	\end{lem}
 		\begin{proof}
 			Similar to the proof of Theorem \ref{thm:trotter-tele}, we apply Proposition \ref{prop:telescopic-sum} to 
 			\begin{equation*}
 				F(t,s)=F(t-s)\quad T(t,0s=e^{(t-s)(\cL+\cK)}
 			\end{equation*}
 			w.r.t.~the sequence $s_0=0<t\frac{1}{n}<...<t\frac{n-1}{n}<s_n=t$ so that
 			\begin{equation*}
 				\begin{aligned}
 					\|\left(F\Bigl(\frac{t}{n}\Bigr)\right)^nx-e^{t(\cL+\cK)}x\|_{\cX}&\leq n \max_{s\in[0,t(1-1/n)]}\|\left(F\Bigl(\frac{t}{n}\Bigr)-e^{\frac{t}{n}(\cL+\cK)}\right)e^{s(\cL+\cK)}x\|_{\cX}\\
 					&\leq \frac{t^{m+1}}{n^m}\tilde{c}be^{t\omega}\|x\|_{\cY}
 				\end{aligned}
 			\end{equation*}
 			where we used the assumption in Equation \ref{eq:assumption-suzuki-boundedness} and that $\cY$ is $({\cL+\cK})$-stable in the last step.
 		\end{proof}
 		On quantum lattice system in the infinite volume limit, \cite{Bachmann.2022} shows that Equation \ref{eq:assumption-suzuki-boundedness} is satisfied and the above convergence rate in the strong topology is achieved. Next, we apply the above structure to the symmetrized Trotter product formula (or second order Trotter-Suzuki product formula) and give assumptions such that quadratic convergence is achieved. 
		\begin{thm}\label{thm:symmetrized-Lie}
			Let $(\cL,\cD(\cL))$ and $(\cK,\cD(\cK))$ be generators of two contractive $C_0$-semigroups and $(\cY\subset\cX,\|\cdot\|_{\cY})$ a Banach space such that the closure of $(\cY,\cL+\cK)$ generates a contractive $C_0$-semigroup in $\cX$. If additionally all products of $\cK$ and $\cL$ up to the third power\footnote{$\cL$, $\cK$, $\cL^2$, $\cK^2$, $\cL\cK$, $\cK\cL$, $\cK^3$, $\cK^2\cL$, $\cK\cL^2$, $\cL^3$, $\cL^2\cK$, $\cK\cL\cK$, $\cL \cK^2$, $\cL\cK\cL$} are relatively $\cY$-bounded and $(\cY,\|\cdot\|_{\cY})$ is $(\cL+\cK)$-stable, in particular $\|e^{t(\cL+\cK)}\|_{\cY\rightarrow\cY}\leq \tilde{c}e^{t\omega}$, then for all $x\in\cY$ and $t\geq0$ there is a $c\geq0$ so that
			\begin{equation*}
				\begin{aligned}
					\norm{\left(e^{\frac{t}{2n}\cK}e^{\frac{t}{n}\cL}e^{\frac{t}{2n}\cK}\right)^nx-e^{t(\cL+\cK)}x}_{\cX}&\leq c\frac{t^3}{n^2}e^{t\omega}\|x\|_{\cY}\,.
				\end{aligned}
			\end{equation*}
		\end{thm}
		\begin{proof}
			Similarly to the proof of Theorem \ref{thm:trotter-tele}, we use the integral equation (\ref{eq:fund-thm-calculus}) to prove a bound like Equation \ref{eq:assumption-suzuki-boundedness} to apply Lemma \ref{lem:general-suzuki}. For $x\in\cY$
            \begin{equation*}
                \begin{aligned}
                    &e^{\frac{\cK}{2n}}e^{\frac{\cL}{n}}e^{\frac{\cK}{2n}}x-e^{\frac{\cL+\cK}{n}}x\\
                    &=\frac{1}{n}\int_0^1e^{s_1\frac{\cK}{2n}}\frac{\cK}{2}x+e^{\frac{\cK}{2n}}e^{s_1\frac{\cL}{n}}\cL x+e^{\frac{\cK}{2n}}e^{\frac{\cL}{n}}e^{s_1\frac{\cK}{2n}}\frac{\cK}{2}x-e^{s_1\frac{\cL+\cK}{n}}(\cL+\cK)xds_1\\
                    &=\frac{1}{n^2}\int_0^1\int_0^1s_1e^{s_1s_2\frac{\cK}{2n}}\frac{\cK^2}{4}x+e^{s_2\frac{\cK}{2n}}\frac{\cK}{2}\cL x+s_1e^{\frac{\cK}{2n}}e^{s_1s_2\frac{\cL}{n}}\cL^2 x-s_1e^{s_1s_2\frac{\cL+\cK}{n}}(\cL+\cK)^2x\\
                    &\qquad\qquad +e^{s_2\frac{\cK}{2n}}\frac{\cK^2}{4}x+e^{\frac{\cK}{2n}}e^{s_2\frac{\cL}{n}}\cL\frac{\cK}{2} x+s_1e^{\frac{\cK}{2n}}e^{\frac{\cL}{n}}e^{s_1s_2\frac{\cK}{2n}}\frac{\cK^2}{4}x ds_2ds_1\\
                    &=\frac{1}{n^3}\int_0^1\int_0^1\int_0^1s_1^2s_2e^{s_1s_2s_3\frac{\cK}{2n}}\frac{\cK^3}{8}x+s_2e^{s_2s_3\frac{\cK}{2n}}\frac{\cK^2}{4}\cL x+s_1e^{s_3\frac{\cK}{2n}}\frac{\cK}{2}\cL^2 x+s_1^2s_2e^{\frac{\cK}{2n}}e^{s_1s_2s_3\frac{\cL}{n}}\cL^3 x\\
                    &\qquad\qquad +s_2e^{s_2s_3\frac{\cK}{2n}}\frac{\cK^3}{8}x+e^{s_3\frac{\cK}{2n}}\frac{\cK}{2}\cL\frac{\cK}{2} x+s_2e^{\frac{\cK}{2n}}e^{s_2s_3\frac{\cL}{n}}\cL^2\frac{\cK}{2} x\\
                    &\qquad\qquad +s_1e^{s_3\frac{\cK}{2n}}\frac{\cK^3}{8}x+s_1e^{\frac{\cK}{2n}}e^{s_3\frac{\cL}{n}}\cL\frac{\cK^2}{4}x+s_1^2s_2e^{\frac{\cK}{2n}}e^{\frac{\cL}{n}}e^{s_1s_2s_3\frac{\cK}{2n}}\frac{\cK^3}{8}x\\
                    &\qquad\qquad-s_1^2s_2e^{s_1s_2s_3\frac{\cL+\cK}{n}}(\cL+\cK)^3x\, ds_3ds_2ds_1\,.
                \end{aligned}
			\end{equation*}
			Note that the input $x\in\cY$ plays the role of $x_s=e^{s(\cL+\cK)}x$ in the proof of Theorem \ref{thm:trotter-tele}. Again, the integral is well-defined (see \cite[Lem.~II.1.3]{Engel.2000}) and the relative boundedness
			\begin{equation*}
				\|A\|_{\cY\rightarrow\cX}\leq a_{A}
			\end{equation*}
			for $A\in\{\cK^3,\cK^2\cL,\cK\cL^2,\cL^3,\cK\cL\cK,\cL\cK\cL,\cL^2\cK,\cL\cK^2\}$ shows
			\begin{equation*}
				\begin{aligned}
					\|e^{\frac{\cK}{2n}}&e^{\frac{\cL}{n}}e^{\frac{\cK}{2n}}x-e^{\frac{\cL+\cK}{n}}x\|_{\cX}\\
					&\leq\frac{1}{n^3}\int_0^1\int_0^1\int_0^1\biggl(\frac{2s_1^2s_2+s_2+s_1}{8}\|\cK^3x\|_{\cX}+\frac{s_2}{4}\|\cK^2\cL x\|_{\cX}+\frac{s_1}{2}\|\cK\cL^2 x\|_{\cX}+s_1^2s_2\|\cL^3 x\|_{\cX}\\
					&\qquad\qquad +\frac{1}{4}\|\cK\cL\cK x\|_{\cX}+\frac{s_2}{2}\|\cL^2\cK x\|_{\cX}+\frac{s_1}{4}\|\cL\cK^2x\|_{\cX}+s_1^2s_2\|(\cL+\cK)^3x\|_{\cX}\biggr)\, ds_3ds_2ds_1\\
					&\leq\frac{1}{6n^3}\biggl(a_{\cK^3}+\frac{3}{4}a_{\cK^2\cL}+\frac{3}{2}a_{\cK\cL^2}+a_{\cL^3}+\frac{3}{2}a_{\cK\cL\cK}+\frac{3}{2}a_{\cL^2\cK}+\frac{3}{4}a_{\cL\cK^2}\biggr)\|x\|_{\cY}\\
					&\quad+\frac{1}{6n^3}\biggl(a_{\cL^3}+a_{\cL^2\cK}+a_{\cL\cK\cL}+a_{\cL\cK^2}+a_{\cK\cL^2}+a_{\cK\cL\cK}+a_{\cK^2\cL}+a_{\cK^3}\biggr)\|x\|_{\cY}\\
					&\leq\frac{1}{6n^3}\biggl(2a_{\cK^3}+\frac{7}{4}a_{\cK^2\cL}+\frac{5}{2}a_{\cK\cL^2}+2a_{\cL^3}+\frac{5}{2}a_{\cK\cL\cK}+\frac{5}{2}a_{\cL^2\cK}+\frac{7}{4}a_{\cL\cK^2}\biggr)\|x\|_{\cY}\,.
				\end{aligned}
			\end{equation*}
			By rescaling the time and the above calculation, there is a $b\geq0$
			\begin{equation*}
				\|e^{\frac{t}{2n}\cK}e^{\frac{t}{n}\cL}e^{\frac{t}{2n}\cK}x-e^{\frac{t}{n}(\cL+\cK)}x\|_{\cX}\leq b\frac{t^3}{n^3}\|x\|_{\cY}
			\end{equation*}
			so that Lemma \ref{lem:general-suzuki} finishes the proof of the statement. 
		\end{proof}

	\subsection{Time-dependent Trotter product formula}\label{subsec:time-dep-trotter}
		Finally, we prove a Trotter product formula for evolution systems of the form
		\begin{equation*}
			\prod_{j=1}^n U(s_{j},s_{j-1})V(s_{j},s_{j-1})\rightarrow T(t,s)\qquad\text{for}\qquad n\rightarrow\infty\,,
		\end{equation*}
		where $0\leq s=s_0<...<s_n=t$ is a partition of $[s,t]$, $U(t,s)$ and $V(t,s)$ are two differentiable $C_0$-evolution systems with generator families $(\cL_s,\cD(\cL_s))$ and $(\cK_s,\cD(\cK_s))$, and appropriate assumptions.
		\begin{thm}\label{thm:trotter-time-indep}
			Let $U(t,s)$ and $V(t,s)$ be contractive $C_0$-evolution system for $(t,s)\in I_T$ and with generator families $(\cL_t,\cD(\cL_t))_{t\in[0,T]}$ and $(\cK_t,\cD(\cK_t))_{t\in[0,T]}$. Moreover, assume that there is a Banach space $(\cY\subset\cX,\|\cdot\|_{\cY})$ so that the closure of $(\cL_t+\cK_t,\cY)_{t\in[0,T]}$ generates a unique contractive $C_0$-evolution systems $T(t,s)$ on $\cX$ for $(t,s)\in I_T$. If $\cL_s$, $\cK_s$, $\cL_s\cL_t$, $\cK_s\cL_t$, $\cL_s\cK_t$, $\cK_s\cK_t$ are well-defined on $\cY$, relatively $\cY$-bounded, and $(\cY,\|\cdot\|_{\cY})$ is $(\cL_t+\cK_t)$-stable uniformly for all $(s,t)\in I_T$, in particular $\|T(t,s)\|_{\cY\rightarrow\cY}\leq \tilde{c}e^{(t-s)\omega}$, then there is a $c\geq0$ so that
			\begin{equation*}
				\begin{aligned}
					\norm{\biggl(\prod_{j=1}^n U(s_{j},s_{j-1})V(s_{j},s_{j-1})\biggr)x-T(t,0)x}_{\cX}&\leq c\frac{t^2}{n}e^{t\omega}\|x\|_{\cY}
				\end{aligned}
			\end{equation*}
			for all $x\in\cY$, $t\in[0,T]$, and $s_0=0<\frac{t}{n}<...<\frac{(n-1)t}{n}<t=s_n$.
		\end{thm}
		\begin{proof}
			First, we rescale the time parameter so that $0\leq s\leq t\leq1$ and apply Proposition \ref{prop:telescopic-sum} to
			\begin{equation*}
				F(t,s)=U(t,s)V(t,s)
			\end{equation*}
			with respect to the sequence $s_0=0<\frac{1}{n}<...<\frac{n-1}{n}<1=s_n$. Then, we achieve
			\begin{equation*}
				\begin{aligned}
					\|\biggl(\prod_{j=1}^nU(s_{j}&,s_{j-1})V(s_{j},s_{j-1})\biggr)x-T(1,0)x\|_{\cX}\\
					&\leq n \max_{j\in\{1,...,n\}}\|\Bigl(U(s_{j},s_{j-1})V(s_{j},s_{j-1})-T(s_{j},s_{j-1})\Bigr)T(s_{j-1},s_0)x\|_{\cX}\,.
				\end{aligned}
			\end{equation*}
			Then, the integral equation (\ref{eq:fund-thm-calculus}) as well as Lemma \ref{lem:product-continuity} show with $s_j(\tau)=s_j+\frac{1-\tau}{n}$
			\begin{equation}\label{eq:time-trotter-op-bound}
				\begin{aligned}
					&U(s_{j},s_{j-1})V(s_{j},s_{j-1})\tilde{x}-T(s_{j},s_{j-1})\tilde{x}\\
					&=(U(s_{j},s_{j-1})-\1)\tilde{x}+U(s_{j},s_{j-1})(V(s_{j},s_{j-1})-\1)\tilde{x}-(T(s_{j},s_{j-1})-\1)\tilde{x}\\
					&=\frac{1}{n}\int_{0}^{1}\frac{\partial}{\partial \tau_1}\Bigl(U(s_{j},s_{j-1}(\tau_1))\tilde{x}+U(s_{j},s_{j-1})V(s_{j},s_{j-1}(\tau_1))\tilde{x}-T(s_{j},s_{j-1}(\tau_1))\tilde{x}\Bigr)d\tau_1\\
					&=\frac{1}{n}\int_{0}^{1}\biggl(U(s_{j},s_{j-1}(\tau_1))\cL_{s_{j-1}(\tau_1)}\tilde{x}+U(s_{j},s_{j-1})V(s_{j},s_{j-1}(\tau_1))\cK_{s_{j-1}(\tau_1)}\tilde{x}\\
					&\qquad\qquad\quad-T(s_{j},s_{j-1}(\tau_1))(\cL_{s_{j-1}(\tau_1)}+\cK_{s_{j-1}(\tau_1)})\tilde{x}\Bigr)d\tau_1\\
					&=\frac{1}{n^2}\iint_{0}^{1}\Bigl(\tau_1U(s_{j},s_{j-1}(\tau_1\tau_2))\cL_{s_{j-1}(\tau_1\tau_2)}\cL_{s_{j-1}(\tau_1)}\tilde{x}+U(s_{j},s_{j-1}(\tau_2))\cL_{s_{j-1}(\tau_2)}\cK_{s_{j-1}(\tau_1)}\tilde{x}\\
					&\qquad\qquad\quad+\tau_1U(s_{j},s_{j-1})V(s_{j},s_{j-1}(\tau_1\tau_2))\cK_{s_{j-1}(\tau_1\tau_2)}\cK_{s_{j-1}(\tau_1)}\tilde{x}\\
					&\qquad\qquad\quad-\tau_1T(s_{j},s_{j-1}(\tau_1\tau_2))(\cL_{s_{j-1}(\tau_1\tau_2)}+\cK_{s_{j-1}(\tau_1\tau_2)})(\cL_{s_{j-1}(\tau_1)}+\cK_{s_{j-1}(\tau_1)})\tilde{x}\biggr)d\tau_2d\tau_1\,,
				\end{aligned}
			\end{equation}
			where $\tilde{x}\coloneqq T(s_{j-1},s_0)x$. Next, we use the relative boundedness assumptions as well as the property that $(\cY,\|\cdot\|_{\cY})$ defines an $(\cL_s+\cK_s)$-stable subspace to show
			\begin{equation*}
				\begin{aligned}
					\|\cL_{s_{j-1}(\tau_1)}\tilde{x}\|_{\cX}&\leq b\|T(s_{j-1},s_0)x\|_{\cY}\leq b\,\tilde{c}\,e^{(s_{j-1}-s_0)\omega}\|x\|_{\cY}
				\end{aligned}
			\end{equation*}
			and
			\begin{equation*}
				\begin{aligned}
					\|U(s_{j},s_{j-1}(\tau_1\tau_2))\cL_{s_{j-1}(\tau_1\tau_2)}\cL_{s_{j-1}(\tau_1)}\tilde{x}\|_{\cX}&\leq\|\cL_{s_{j-1}(\tau_1\tau_2)}\cL_{s_{j-1}(\tau_1)}T(s_{j-1},s_0)x\|_{\cX}\\
					&\leq b\|T(s_{j-1},s_0)x\|_{\cY}\\
					&\leq b\,\tilde{c}\,e^{(s_{j-1}-s_0)\omega}\|x\|_{\cY}\\
				\end{aligned}
			\end{equation*}
			as well as similar bounds for all the other terms coming up in Equation \ref{eq:time-trotter-op-bound}. These two bounds finish the proof of the error bound of the statement.
		\end{proof}
		
	\subsection{Application: Bosonic strongly continuous semigroups}\label{subsec:bosonic-trotter}
		In this section, we focus on bosonic $C_0$-semigroups defined on the space of self-adjoint trace-class operators on the Fock space (see Sec.~\ref{subsec:bosonic-semigroups}). Specially, we consider generators admitting the following GKSL structure
		\begin{equation}\label{eq:lindblad2}
			\cL:\cT_f \to \cT_f \quad x \mapsto\cL(x) = - i [H, x] + \sum\limits_{j = 1}^K L_j x L_j^\dagger  - \frac{1}{2}\{L_j^\dagger L_j, x\}
		\end{equation}
		with $K\in\mathbb{N}$, $L_j\coloneqq p_j(a_1,\ad_1,...)$ for all $j\in\{1,...,m\}$, and a symmetric $H\coloneqq p_H(a_1,\ad_1,...)$. Note that in the time-dependent case the coefficients would additionally depend continuously on time. Assuming Equation \ref{eq:assumsobolevstability}, i.e.~there are sequences $(\k_r)_{r \in \N}\subset\N^m$ diverging coordinate-wise to infinity and $(\omega_{\k_r})_{r\in\N}$ so that 
		\begin{equation*}
			\tr[\cW^{\k_r}\cL(x)] \le \omega_{\k_r} \tr[\cW^{\k_r}(x)]
		\end{equation*}
		for all positive semi-definite $x \in \cT_f$, not only shows that the closure of $(\cL,\cT_f)$ generates a quantum dynamical $C_0$-semigroup, but also that it is Sobolev preserving, i.e.
		\begin{equation}\label{eq:sobolev-preserving}
			\|T_t\|_{W^{\k_r, 1} \to W^{\k_r, 1}} \le e^{\omega_{\k_r} t}
		\end{equation}
		for all $t\geq0$. Moreover, the above inequality holds not only for the points of $\k_r$ but can be extended by interpolation (for more details \cite[Thm.~4.4, Lem.~E.4]{Gondolf.2023}). Note that $W^{\k_r, 1}$ is a $\cL$-admissible subspace and plays exactly the role of the Banach space $\cY$ discussed in the statements before. One interesting class of examples for Sobolev-preserving semigroups is this of decoupled quadratic Hamiltonians with an additional polynomial in the local number operators, i.e.
		\begin{equation}\label{eq:quadratic-hamiltonian}
			H=\sum_{i\in\{1,..,m\}}\sum_{k_i+\ell_i\leq2} \lambda_{k_i\ell_i}(\ad_i)^{k_i}a_i^{\ell_i}+p(N_1,...,N_m)
		\end{equation}
		defined on $\cH_f$ and with $\lambda_{k_i\ell_i}=\overline{\lambda}_{\ell_ik_i}\in\C$ as well as $p\in\R[x_1,...,x_m]$. This example can be eassily extended to $C_0$-evolution system by assuming that $\lambda_{k_i\ell_i}$ depends continuously on time for $t\in[0,T]$. Then, the operator family $(-i[H(t),\cdot],\cT_f)_{t\in[0,T]}$ generates a unique $C_0$-evolution system (see proof of\cite[Lem.~5.6]{Gondolf.2023} and \cite[Thm.~4.4-4.5]{Gondolf.2023}). 
  
        Another example is the generator of the Ornstein-Uhlenbeck semigroup which is defined as
        \begin{equation}
	        \cL_{\operatorname{qOU}} = \lambda^2 \cL[a] + \mu^2 \cL[\ad]
        \end{equation}
        for $\mu, \lambda \ge0$. In \cite[Sec.~5.1]{Gondolf.2023}, we showed that this semigroup is Sobolev-preserving as well. 
        
        Generalizing the first term in the above generator, one achieves the so called $l$-photon driven dissipation generated by 
        \begin{equation}\label{eq:l-photon-dissipation}
	           \cL[a^l-\alpha^l]
        \end{equation}
        for $\alpha\in\C$ and $l\in\N$. This operator is used in bosonic error correction codes and satisfies the moment stability bound (\ref{eq:assumsobolevstability}) as well. Therefore, it generates a Sobolev preserving quantum Markov semigroup (see \cite[Sec.~5.2]{Gondolf.2023}). 
        
        Besides the property of the semigroups to be Sobolev-preserving, the finite degree assumption on $\cL$ in Equation \ref{eq:lindblad2} implies by \cite[Lem.~E3]{Gondolf.2023} the existence of a $\k\in\N^m$ and $b\geq0$ such that
		\begin{equation}\label{eq:relative-boundedness}
			\|\cL(x)\|_1\leq b\|x\|_{W^{\k,1}}\,.
		\end{equation}
		With the help of these two properties, we can apply the above theorems to bosonic Sobolev preserving $C_0$-semigroups. For that, let $(\cL,\cD(\cL))$ and $(\cK,\cD(\cK))$ be generators of contractive $C_0$-semigroups on $\cT_{1,\operatorname{sa}}$ satisfying Equation \ref{eq:lindblad2}. Additionally, assume that the closure of $(\cL+\cK,\cT_f)$ generates a Sobolev preserving, which satisfies Equation \ref{eq:sobolev-preserving} in particular. This follows directly from \cite{Gondolf.2023} if Equation \ref{eq:assumsobolevstability} is satisfied. Then, there is a $\k$ and $b\geq0$ such that Theorem \ref{thm:trotter-time-indep} and \ref{thm:symmetrized-Lie} show
		\begin{equation*}
			\begin{aligned}
				\norm{\left(e^{\frac{t}{n}\cL}e^{\frac{t}{n}\cK}\right)^nx-e^{t(\cL+\cK)}x}_{\cX}&\leq b\frac{t^2}{n}e^{t\omega_{2\k}}\|x\|_{\cW^{2\k,1}}\\
				\norm{\left(e^{\frac{t}{2n}\cK}e^{\frac{t}{n}\cL}e^{\frac{t}{2n}\cK}\right)^nx-e^{t(\cL+\cK)}x}_{\cX}&\leq b\frac{t^3}{n^2}e^{t\omega_{3\k}}\|x\|_{\cW^{3\k,1}}\,.
			\end{aligned}
		\end{equation*}
		Here $\cK$ and $\cL$ can be for example quadratic Hamiltonian of the form Equation \ref{eq:quadratic-hamiltonian} or one of the other examples mentioned above.
		The constant $\k$ coming up in the above bound can be understood as the component-wise maximum of $\k$'s satisfying Equation \ref{eq:relative-boundedness} for $\cL$ and $\cK$. Then, concatenations of $\cL$ and $\cK$ simply accumulate in the regularity $\k$ of the Sobolev space --- we get $2\k$ or $3\k$.
		
		An interesting observation is that only the limiting semigroup generated by $\overline{\cL+\cK}$ needs to be Sobolev preserving, which enables the scheme to be used as a "regularization" method in the sense of Sobolev preserving semigroups. More explicitly, many semigroups even satisfy a stronger assumption than Equation \ref{eq:assumsobolevstability}, which is 
		\begin{equation}\label{eq:assumsobolevstability-stronger}
			\tr[\cW^{\k_r}\cL(x)] \le -\mu_{\k_r} \tr[\cW^{\k_r+\l}(x)]+c_{\k_r}
		\end{equation}
		for a sequence $\k_r$ converging component-wise to infinity, $\l\in\N^m$ depending highly on the structure of the generator $\cL$, and constants $\mu_{\k_r},c_{\k_r}\geq0$. This stronger assumption not only allows us to improve the bound Equation \ref{eq:sobolev-preserving} for states to a constant scaling in $t$ (see \cite[Prop.~5.1]{Gondolf.2023}), but also to bound all other terms which are not of leading order by a constant. For example, we can show that for any Hamiltonian defined by $H=p_{H}(a_1,\ad_1,...,a_m,\ad_m)$ there is a $l\in\N$ so that
		\begin{equation}\label{eq:l-photon-dissipation-hamiltonian}
			-i[H,\cdot]+\cL[a^l-\alpha^l](\cdot)
		\end{equation}
		generates a Sobolev preserving semigroup and, in particular, it satisfies Equation \ref{eq:assumsobolevstability-stronger}. This idea is for example used in the recent bosonic Hamiltonian Learning scheme \cite{Moebus.2023Learning}, in which a similar method is used to derive Lieb-Robinson bounds.  
		
		The same reasoning translates to the time-dependent case as shortly mentioned before in the case of decoupled quadratic Hamiltonians. Here, the coefficients of the polynomials in Equation \ref{eq:lindblad2} depend continuously on time $t\in[0,T]$. Then, a similar statement as Theorem \ref{thm:multimode-generation-theorem} can be proven if Equation \ref{eq:assumsobolevstability} is satisfied with a time-independent right-hand side (see \cite[Thm.~4.5]{Gondolf.2023}). Assuming that all involved evolution systems are well-defined and of GKSL form (\ref{eq:lindblad2}), the assumption that the Trotter-limit is again Sobolev preserving implies 
		\begin{equation*}
			\begin{aligned}
				\norm{\prod_{j=1}^n U(s_{j+1},s_j)V(s_{j+1},s_j)x-T(t,s)x}_{\cX}\leq c\frac{t^2}{n}e^{t\omega}\|x\|_{W^{2\k,1}}
			\end{aligned}
		\end{equation*}
		for a $c,\omega\geq0$, $\k\in\N^m$ and all $t\in[0,T]$, $x\in W^{2\k,1}$  by Theorem \ref{thm:trotter-time-indep}. Trotter's product formula for time-independent and time-dependent quantum Markov semigroups play a key role in the simulation of certain Hamiltonian. For example, it was shown that local Hamiltonians can be simulated efficiently \cite{Lloyd.1996} by using a Trotter product formula. As mentioned in the introduction, the idea is to decompose a Hamiltonian or Lindbladian in simpler for example local terms. It is important to mention that it is not clear yet how to utilize information propagation bounds to achieve a bound dependent of the system size (see \cite{Moebus.2023Learning}). In the case of evolutions systems, the question of efficient simulation is directly related to adiabatic quantum computation proposed by \citeauthor{Feynman.1982} \cite{Feynman.1982} and developed by \cite{Lloyd.1996} and many other works like \cite{Farhi.2000, Farhi.2001}. In bosonic systems, similar protocols are constructed and the present work can be utilized to find explicit error bounds on the Trotter product formula (see for example \cite{Sun.2020}). 
		
\section{\bfseries Quantum Zeno effect}\label{sec:zeno}
	As mentioned, the quantum Zeno effect is an operator product formula of a contractive projection $P\in\cB(\cX)$ and a $C_0$-semigroup $e^{t\cL}$:
	\begin{equation*}
		(Pe^{\frac{\cL}{n}})^n\rightarrow e^{P\cL P}P\qquad\text{for}\quad n\rightarrow\infty\,.
	\end{equation*}
	We start with the projective Zeno effect without assuming the asymptotic Zeno condition \cite[Eq.~6]{Moebus.2023}. Then, we extend the result to more general contractions $M$ as in \cite{Moebus.2023,Becker.2021}. Finally, we apply the abstract theory again to Sobolev preserving semigroups, in particular, to an example from bosonic continuous error correction. 
	\subsection{Projective Zeno effect}\label{subsec:projective-zeno}
		We start with the simplest case presented in Equation \ref{eq:zeno} where a $C_0$-semigroup is interrupted by a contractive projection $P$.
		\begin{prop}\label{prop:simple-zeno}
			Let $(\cL,\cD(\cL))$ be a generator of a contractive $C_0$-semigroup and $P$ a projective contraction. Assume there is a Banach space $(\cY,\|\cdot\|_{\cY})$ invariant under $P$ such that the closure of $(P\cL P,\cY)$ defines a contractive $C_0$-semigroup, $\cL P$ and $\cL^2P$ are relatively $\cY$-bounded, and $(\cY,\|\cdot\|_{\cY})$ is a $P\cL P$-stable subspace, in particular $\|e^{tP\cL P}\|_{\cY\rightarrow\cY}\leq \tilde{c}e^{t\omega}$. Then, there is a $c_1\geq0$ such that for all $t\geq0$, $x\in\cY$, and $n\in\N$
			\begin{equation*}
				\begin{aligned}
					\|\left(Pe^{t\frac{\cL}{n}}P\right)^nPx-e^{tP\cL P}Px\|_{\cX}&\leq\frac{c_1t^2}{n}\left(e^{t\omega}\|Px\|_{\cY}+\|(P\cL P)^2x\|_{\cX}\right).
				\end{aligned}
			\end{equation*}
			If we additionally assume that $(P\cL P)^2$ is relatively $\cY$-bounded, then there is a $c_2$ such that
			\begin{equation*}
				\begin{aligned}
					\|\left(Pe^{\frac{\cL}{n}}P\right)^nPx-e^{P\cL P}Px\|_{\cX}&\leq\frac{c_2(1+e^{t\omega})}{n}\|Px\|_{\cY}\,.
				\end{aligned}
			\end{equation*}
		\end{prop}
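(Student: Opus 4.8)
The natural route is to feed the one-step error into the telescopic estimate of \Cref{prop:telescopic-sum} and then expand it to second order with the integral equation~\eqref{eq:fund-thm-calculus}, exploiting that the first-order term cancels on $\operatorname{ran}P$. First I would record the preliminary observation that, since $P\cL P$ (hence its closure) maps its domain into the closed subspace $\operatorname{ran}P$, the semigroup $e^{tP\cL P}$ leaves $\operatorname{ran}P$ invariant and $\overline{P\cL P}$ commutes with $P$ on $\cD(\overline{P\cL P})$ (in particular $(\overline{P\cL P})^2Px=(\overline{P\cL P})^2x$ on $\cD((\overline{P\cL P})^2)$); combined with $P$-invariance and $P\cL P$-stability of $\cY$ this gives $e^{tP\cL P}(\operatorname{ran}P\cap\cY)\subseteq\operatorname{ran}P\cap\cY$ with $\|e^{tP\cL P}y\|_\cY\le\tilde c\,e^{t\omega}\|y\|_\cY$. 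Then I apply \Cref{prop:telescopic-sum} with $T(\tau,\sigma):=e^{(\tau-\sigma)P\cL P}$, with $F(\tau,\sigma):=Pe^{(\tau-\sigma)\cL}P$ for $\tau>\sigma$ and $F(\sigma,\sigma):=\1$ (so $\|F\|\le1$ and the product over the nodes $s_j=jt/n$ equals $(Pe^{\frac{t}{n}\cL}P)^n$), and with input $Px$, which yields
\[
\bigl\|(Pe^{\frac{t}{n}\cL}P)^nPx-e^{tP\cL P}Px\bigr\|_\cX\le n\max_{1\le j\le n}\bigl\|\bigl(Pe^{\frac{t}{n}\cL}P-e^{\frac{t}{n}P\cL P}\bigr)y_j\bigr\|_\cX,\qquad y_j:=e^{s_{j-1}P\cL P}Px .
\]

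For the one-step error I fix $j$, put $y:=y_j\in\operatorname{ran}P\cap\cY$, so $Py=y$ and $\|y\|_\cY\le\tilde c\,e^{t\omega}\|Px\|_\cY$. If $x\notin\cD((\overline{P\cL P})^2)$ the claimed estimate is vacuous, so I may assume $x\in\cD((\overline{P\cL P})^2)$; then $Px,y\in\cD((\overline{P\cL P})^2)$, while $\cL^2P$ relatively $\cY$-bounded forces $y\in\cD(\cL^2)$. Expanding both terms to second order with~\eqref{eq:fund-thm-calculus} and using $Py=y$ (hence $\overline{P\cL P}\,y=P\cL Py=P\cL y$, so the first-order contributions cancel),
\[
Pe^{\frac{t}{n}\cL}y-e^{\frac{t}{n}P\cL P}y=\frac{t^2}{n^2}\iint_0^1 s_1\Bigl(Pe^{s_1s_2\frac{t}{n}\cL}\cL^2y-e^{s_1s_2\frac{t}{n}P\cL P}(\overline{P\cL P})^2y\Bigr)\,ds_2\,ds_1 .
\]
Bounding the integrand by contractivity of $P$ and of the two semigroups, with $\iint_0^1 s_1\,ds_1\,ds_2=\tfrac12$, and then using $\|\cL^2y\|_\cX=\|\cL^2Py\|_\cX\le\|\cL^2P\|_{\cY\to\cX}\|y\|_\cY$ together with $\|(\overline{P\cL P})^2y\|_\cX\le\|(\overline{P\cL P})^2Px\|_\cX=\|(\overline{P\cL P})^2x\|_\cX$ (commuting $(\overline{P\cL P})^2$ through the semigroup), I obtain
\[
\bigl\|\bigl(Pe^{\frac{t}{n}\cL}P-e^{\frac{t}{n}P\cL P}\bigr)y\bigr\|_\cX\le\frac{t^2}{2n^2}\Bigl(\tilde c\,\|\cL^2P\|_{\cY\to\cX}\,e^{t\omega}\|Px\|_\cY+\|(\overline{P\cL P})^2x\|_\cX\Bigr).
\]
Substituting into the telescopic bound, the prefactor $n$ cancels one power of $n^{-2}$, giving the first inequality with $c_1=\tfrac12\max\{\tilde c\,\|\cL^2P\|_{\cY\to\cX},\,1\}$.

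For the second inequality, the extra hypothesis that $(\overline{P\cL P})^2$ is relatively $\cY$-bounded gives, for $x\in\cY$, $Px\in\cY\subseteq\cD((\overline{P\cL P})^2)$ and $\|(\overline{P\cL P})^2x\|_\cX=\|(\overline{P\cL P})^2Px\|_\cX\le\|(\overline{P\cL P})^2\|_{\cY\to\cX}\|Px\|_\cY$; inserting this into the first inequality (specialised to $t=1$ by the standard rescaling) and setting $c_2=c_1\max\{1,\|(\overline{P\cL P})^2\|_{\cY\to\cX}\}$ produces the stated bound $\tfrac{c_2}{n}(1+e^{t\omega})\|Px\|_\cY$.

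I expect the main difficulty to be bookkeeping rather than a hard estimate: one must verify that $e^{tP\cL P}$ preserves $\operatorname{ran}P$ and that $\overline{P\cL P}$ commutes with $P$ on its domain — this is what legitimises the first-order cancellation and turns $\|(\overline{P\cL P})^2Px\|_\cX$ into $\|(\overline{P\cL P})^2x\|_\cX$ — and that $y,\cL y,\cL^2y$ lie in the domains needed to apply~\eqref{eq:fund-thm-calculus} termwise, which is precisely what the hypotheses ``$\cL^2P$ relatively $\cY$-bounded'' and ``$\cY$ is $P\cL P$-stable'' are there to supply. The only genuinely non-automatic modelling choice, and the reason the first bound must retain the awkward term $\|(\overline{P\cL P})^2x\|_\cX$ instead of a clean $\cY$-norm, is that without the additional hypothesis of the second part there is no way to dominate $(\overline{P\cL P})^2y$ by a $\cY$-norm of $y$.
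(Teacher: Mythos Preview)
Your proposal is correct and follows essentially the same approach as the paper: apply the telescopic lemma with $F(\tau,\sigma)=Pe^{(\tau-\sigma)\cL}P$ and $T(\tau,\sigma)=e^{(\tau-\sigma)P\cL P}$, expand the one-step error to second order via the integral equation so that the first-order terms cancel on $\operatorname{ran}P$, and bound the two second-order contributions using the relative $\cY$-boundedness of $\cL^2P$ together with the $P\cL P$-stability of $\cY$ for one term and the commutation of $(\overline{P\cL P})^2$ with its semigroup for the other. Your write-up is in fact more careful than the paper about the domain and commutation bookkeeping (e.g.\ that $e^{tP\cL P}$ preserves $\operatorname{ran}P$ and why $\|(\overline{P\cL P})^2y_j\|_\cX\le\|(\overline{P\cL P})^2x\|_\cX$), points the paper takes for granted.
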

		\begin{proof}
			By rescaling the generator $\cL$, we assume $t\in[0,1]$. Then, we apply Proposition \ref{prop:telescopic-sum} to 
			\begin{equation*}
				F(t,s)=Pe^{(t-s)\cL}P\quad T(t,s)=e^{(t-s)P\cL P}
			\end{equation*}
			defined on the space $P\cX$ on which $P$ is the identity operator and with respect to the sequence $s_0=0<\frac{1}{n}<...<\frac{n-1}{n}<1=s_n$ so that
			\begin{equation*}
				\|\left(Pe^{\frac{\cL}{n}}P\right)^nPx-e^{P\cL P}Px\|_{\cX}\leq n \max_{s\in[0,1-1/n]}\|\left(Pe^{\frac{\cL}{n}}P-e^{\frac{P\cL P}{n}}\right)e^{sP\cL P}Px\|_{\cX}\,.
			\end{equation*}
			Then, we use $e^{P\cL P}Px=Pe^{P\cL P}Px$ and the integral equation (\ref{eq:fund-thm-calculus}) shows
			\begin{equation*}
				\begin{aligned}
					Pe^{\frac{\cL}{n}}Px_s-e^{\frac{P\cL P}{n}}Px_s&=\frac{1}{n^2}\int_0^1\int_0^1\left(s_1Pe^{s_1s_2\frac{\cL}{n}}\cL^2P-s_1e^{s_1s_2P\frac{P\cL P}{n}}(P\cL P)^2\right)x_sds_2ds_1\,,
				\end{aligned}
			\end{equation*}
			where $x_s\coloneqq e^{sP\cL P}Px$. In a first step, we use the contractivity of the involved semigroups and of $P$ to show
			\begin{equation*}
				\begin{aligned}
					\|\int_0^1\int_0^1s_1Pe^{s_1s_2\frac{\cL}{n}}\cL^2Px_sds_2ds_1\|_{\cX}&\leq\int_0^1\int_0^1s_1\|\cL^2Px_s\|_{\cX}ds_2ds_1=\frac{1}{2}\|\cL^2Px_s\|_{\cX}\,.
				\end{aligned}
			\end{equation*}
			Then, the assumptions that $\cL^2P$ is relatively $\cY$-bounded, i.e.~$\|\cL^2P\|_{\cY\rightarrow\cX}<a_1$ for an $a_1\geq0$ and $\cY$ is a $P\cL P$-stable subspace, in particular $\|e^{t\cL}\|_{\cY\rightarrow\cY}\leq b_1e^{t\omega}$ for a $b_1,\omega\geq0$ are utilized to show
			\begin{equation}\label{eq:simple-zeno-rel-boundedness}
				\begin{aligned}
					\|\int_0^1\int_0^1s_1Pe^{s_1s_2\frac{\cL}{n}}\cL^2Px_sds_2ds_1\|_{\cX}&\leq\frac{1}{2}\|\cL^2Px_s\|_{\cX}\leq\frac{a_1}{2}\|e^{sP\cL P}Px\|_{\cY}\leq\frac{a_1b_1}{2}e^{\omega}\|Px\|_{\cY}\,.
				\end{aligned}
			\end{equation}
			Therefore, $c_1=\frac{a_1b_1}{2}$ and rescaling of $t$ implies
			\begin{equation*}
				\begin{aligned}
					\|\left(Pe^{t\frac{\cL}{n}}P\right)^nPx-e^{tP\cL P}Px\|_{\cX}&\leq\frac{c_1t^2}{n}\left(e^{t\omega}\|Px\|_{\cY}+\|(P\cL P)^2x\|_{\cX}\right).
				\end{aligned}
			\end{equation*}
			If $(P\cL P)^2$ is additionally relatively $\cY$-bounded with constants $a_2$, we directly achieve 
			\begin{equation*}
				\begin{aligned}
					\|\left(Pe^{\frac{\cL}{n}}P\right)^nPx-e^{P\cL P}Px\|_{\cX}&\leq\frac{c_2(1+e^{t\omega})}{n}\|Px\|_{\cY}\,,
				\end{aligned}
			\end{equation*}
			with $c_2=\max\{c_1,\frac{a_2}{2}\}$. This finishes the proof of the theorem.
		\end{proof}
		\begin{rmk}\label{rmk:possible-space-choice-zeno}
			To get a better understanding of the above result, we compare it to Theorem I in \cite{Moebus.2023}. Assume that $(P\cL P,\cD(P\cL P))$ generates a contractive $C_0$-semigroup. Then, $\cY=\cD((P\cL P)^2)\cap \cD(P\cL P)$ defines a Banach space with the norm $\|x\|_{\cY}\coloneqq \|x\|_{\cX}+\|P \cL Px\|_{\cX}+\|(P \cL P)^2x\|_{\cX}$. Clearly, $\cY$ is invariant under $P$ and an even $P\cL P$-admissible subspace, in particular, $\|e^{tP\cL P}\|_{\cX\rightarrow\cY}\leq 1$. Next, the asymptotic Zeno condition could be used for the following relative bound
			\begin{equation*}
				\begin{aligned}
					\frac{1}{n^2}\|\int_0^1\int_0^1&s_1Pe^{s_1s_2\frac{\cL}{n}}\cL^2Px_sds_2ds_1\|_{\cX}\\
					&=\frac{1}{n}\|\int_0^1Pe^{s_1\frac{\cL}{n}}(P+\1-P)\cL Px_sds_1-P\cL Px_s\|_{\cX}\\
					&\leq\frac{1}{n}\|\int_0^1Pe^{s_1\frac{\cL}{n}}(\1-P)\cL Px_sds_1+\frac{1}{n}Pe^{s_1\frac{\cL}{n}}\cL P\cL Px_sds_1\|_{\cX}\\
					&\leq\frac{1}{n^2}\left(b^2\|x_s\|_{\cX}+b\|P\cL Px_s\|_{\cX}+\|(P\cL P)^2x_s\|_{\cX}\right)\\
					&\leq\frac{1}{n^2}\max\{b^2,b,1\}\|x_s\|_{\cY}\,,
				\end{aligned}
			\end{equation*}
			which is slightly different from the relative bound used in Equation \ref{eq:simple-zeno-rel-boundedness}. Here, $b$ is defined in Theorem I in \cite{Moebus.2023}.
		\end{rmk}
		In the next step, we generalize the above result to $C_0$-evolution systems:
		\begin{thm}\label{thm:generell-projective-zeno}
			Let $I_T\ni(t,s)\mapsto V(t,s)\in\cB(\cX)$ be a $C_0$-evolution system with generator family $(\cL_t,\cD(\cL_t))_{t\in[0,T]}$ and $P\in\cB(\cX)$ a projective contraction. Assume that there is a Banach space $(\cY\subset\cX,\|\cdot\|_{\cY})$ invariant under $P$ so that the closure of $(P\cL_t P,\cY)_{t\in[0,T]}$ generates a unique contractive $C_0$-evolution systems $T(t,s)$ for $(t,s)\in I_T$ commuting with $P$. Moreover, assume that $\cL_tP$, $\cL_t\cL_sP$, $P\cL_tP$, and $P\cL_tP\cL_sP$ are well-defined on $\cY$ and relatively $\cY$-bounded uniformly for all $(t,s)\in I_T$. If $(\cY,\|\cdot\|_{\cY})$ is a $P\cL_t P$-stable subspace, in particular $\|T(t,s)\|_{\cY\rightarrow\cY}\leq \tilde{c}e^{(t-s)\omega}$, then there is a $c\geq0$ so that
			\begin{equation*}
				\begin{aligned}
					\Bigl\|\prod_{j=1}^n PV(s_{j},s_{j-1})Px-T(t,0)Px\Bigr\|_{\cX}\leq c\frac{t^2}{n}e^{t\omega}\|Px\|_{\cY}
				\end{aligned}
			\end{equation*}
			for all $t\geq0$, $s_0=0<\frac{t}{n}<...<\frac{t(n-1)}{n}<t=s_n$, $x\in\cY$, and $n\in\N$.
		\end{thm}
		Note that the assumption about the existence of the Banach space $\cY$ is motivated by hyperbolic evolution systems for which by definition an admissible subspace exists (see Thm.~\ref{thm:time-dependent-semigroups}).
		\begin{proof}
			As before, we rescale $t\in[0,1]$ and apply Proposition \ref{prop:telescopic-sum} to 
			\begin{equation*}
				F(t,s)=PV(t,s)P
			\end{equation*}
			defined on $P\cX$ w.r.t.~the sequence $s_0=0<\frac{1}{n}<...<\frac{n-1}{n}<1=s_n$ so that
			\begin{equation*}
				\begin{aligned}
					\Bigl\|\prod_{j=1}^n PV(s_{j},s_{j-1})&Px-T(t,0)Px\Bigr\|_{\cX}\\
					&\leq n \max_{j\in\{1,...,n\}}\|\left(PV(s_{j},s_{j-1})P-T(s_j,s_{j-1})\right)T(s_{j-1},s_0)Px\|_{\cX}\,.
				\end{aligned}
			\end{equation*}
			Then, we use $T(s_{j-1},s_0)Px=PT(s_{j-1},s_0)Px$ and the integral equation (\ref{eq:fund-thm-calculus}) to show with $s_j(\tau)=s_j+\frac{1-\tau_1}{n}$
			\begin{equation*}
				\begin{aligned}
					P&V(s_{j},s_{j-1})Px_j-T(s_j,s_{j-1})Px_j\\
					&=\frac{1}{n}\int_{0}^{1}\Biggl(PV(s_{j},s_{j-1}(\tau_1))\cL_{s_{j-1}(\tau_1)}P-T(s_{j},s_{j-1}(\tau_1))P\cL_{s_{j-1}(\tau_1)}P\Biggr)x_jd\tau_1\\
					&=\frac{1}{n^2}\int_{0}^{1}\int_{0}^{1}\Biggl(\tau_1PV(s_{j},s_{j-1}(\tau_1\tau_2))\cL_{s_{j-1}(\tau_1\tau_2)}\cL_{s_{j-1}(\tau_1)}P\\
					&\qquad\qquad\qquad-\tau_1T(s_{j},s_{j-1}(\tau_1\tau_2))P\cL_{s_{j-1}(\tau_1\tau_2)}P\cL_{s_{j-1}(\tau_1)}P\Biggr)x_jd\tau_2d\tau_1\,,
				\end{aligned}
			\end{equation*}
			where $x_j\coloneqq PT(s_{j-1},s_0)Px$. Note that the integral above is again well-defined by Lemma \ref{lem:product-continuity} as well as \cite[Lem.~B.15]{Engel.2000}. As in the proof of Proposition \ref{prop:simple-zeno}, we use the assumption to show
			\begin{equation*}
				\begin{aligned}
					\|\int_{0}^{1}\int_{0}^{1}\tau_1PV(s_{j},s_{j-1}(\tau_1))\cL_{s_{j-1}(\tau_1\tau_2)}\cL_{s_{j-1}(\tau_1)}Px_jd\tau_2d\tau_1\|_{\cX}&\leq\frac{a_1}{2}\|T(s_{j-1},s_0)Px\|_{\cY}\\
					&\leq c_1e^{\omega}\|Px\|_{\cY}
				\end{aligned}
			\end{equation*}
			for $c_1=\frac{a_1,b_1}{2}$ with $a_1,b_1\geq0$ and similarly  
			\begin{equation*}
				\begin{aligned}
					\|\int_{0}^{1}\int_{0}^{1}\tau_1T(s_{j},s_{j-1}(\tau_1))P\cL_{s_{j-1}(\tau_1\tau_2)}P\cL_{s_{j-1}(\tau_1)}Px_jd\tau_2d\tau_1\|_{\cX}&\leq c_2e^{\omega}\|x_j\|_{\cY}
				\end{aligned}
			\end{equation*}
			for $c_2=\frac{a_2,b_2}{2}$ with $a_2,b_2\geq0$. Rescaling and combining both bounds show
			\begin{equation*}
				\begin{aligned}
					\Bigl\|\prod_{j=1}^n PV(s_{j},s_{j-1})&Px-T(t,0)Px\Bigr\|_{\cX}\leq\frac{ct^2}{n}e^{t\omega}\|Px\|_{\cY}
				\end{aligned}
			\end{equation*}
			with $c=2\max\{c_1,c_2\}$.
		\end{proof}
		
	\subsection{Time-dependent Zeno with general measurement}\label{subsec:general-zeno-measurement}
		In the next step, we generalize Theorem I in \cite{Moebus.2023} to $C_0$-evolution systems. For that, we use again the help of a reference subspace $\cY$:
		\begin{thm}\label{thm:general-zeno}
			Let $I_T\ni(t,s)\mapsto V(t,s)\in\cB(\cX)$ be a $C_0$-evolution system with generator family $(\cL_t,\cD(\cL_t))_{t\in[0,T]}$, $M\in\cB(\cX)$ a contraction, and $P$ a projection satisfying 
			\begin{equation}\label{unifpower1}
				\norm{M^n-P}_{\cX\rightarrow\cX}\leq\delta^n
			\end{equation}
			for $\delta\in(0,1)$ and all $n\in\N$.  Moreover, we assume the asymptotic Zeno condition:
			\begin{equation}\label{appx-eq:thm1-asympzeno}
				\norm{PV(t,s)(\1-P)}_{\cX\rightarrow\cX}\leq (t-s)b\quad\text{and}\quad\norm{(\1-P) V(t,s)P}_{\cX\rightarrow\cX}\leq (t-s)b
			\end{equation}
			for $b\geq0$ and all $t\in[0,T]$. Furthermore, there is a Banach space $(\cY\subset\cX,\|\cdot\|_{\cY})$ invariant under $P$ so that the closure of $(P\cL_t P,\cY)_{t\in[0,T]}$ generates a unique contractive $C_0$-evolution system $T(t,s)$ for $(t,s)\in I_T$ commuting with $P$. Moreover, assume that $\cL_tP$, $P\cL_tP$, and $P\cL_tP\cL_sP$ are well-defined on $\cY$ and relatively $\cY$-bounded for all $(t,s)\in I_T$, and $(\cY,\|\cdot\|_{\cY})$ is a $P\cL_t P$-stable subspace, in particular $\|T(t,s)\|_{\cY\rightarrow\cY}\leq \tilde{c}e^{(t-s)\omega}$. Then, there are $c_1,c_2\geq0$ so that
			\begin{equation*}
				\begin{aligned}
					\Bigl\|\prod_{j=1}^n MV(s_{j},s_{j-1})x-T(t,0)Px\Bigr\|_{\cX}\leq \left(\delta^n+\frac{c_1(t+t^2)}{n}\right)\|x\|_{\cX}+c_2\frac{t^2}{n}e^{t\omega}\|Px\|_{\cY}
				\end{aligned}
			\end{equation*}
			for all $t\geq0$, $s_0=0<\frac{t}{n}<...<\frac{t(n-1)}{n}<t=s_n$, $x\in\cY$ and $n\in\N$.
		\end{thm}
		\begin{proof}
			As in the proof of \cite[Thm.~5.1]{Moebus.2023}, we approximate $M$ by $P$ in the first step. To do so, we follow the proof of Lemma 5.2 in \cite{Moebus.2023}. For that 
			\begin{equation*}
				\begin{aligned}
					\Bigl\|\prod_{j=1}^n &(P+(\1-P)M)V(s_{j},s_{j-1})x-\prod_{j=1}^n PV(s_{j},s_{j-1})x\Bigr\|_{\cX}\\
					&=\Bigl\|\sum_{\tau\in\{0,1\}^n\backslash(1,...,1)}\prod_{j=1}^n (\tau_jP+(1-\tau_j)(\1-P)M)V(s_{j},s_{j-1})x\Bigr\|_{\cX}\,.
				\end{aligned}
			\end{equation*}
			Next, we analyse how many transitions in terms of $PV(s_{j},s_{j-1})(1-P)$ exist in the above sum. This is specified in Lemma 5.4 in \cite{Moebus.2023}. Next, we use the bounds
			\begin{equation*}
				\|M(\1-P)V(s_{j},s_{j-1})\|_{\cX\rightarrow\cX}\leq\delta\qquad\text{and}\qquad\|M(\1-P)V(s_{j},s_{j-1})PMV(s_{j+1},s_{j})\|\leq\delta\frac{b}{n}
			\end{equation*}
			and follow the calculations on \cite[p.~12]{Moebus.2023} to show
			\begin{equation}\label{eq:general-zeno-step1}
				\norm{\prod_{j=1}^n MV(s_{j},s_{j-1})x-\prod_{j=1}^n PV(s_{j},s_{j-1})Px}_{\cX}\leq\left(\delta^n+\frac{tb}{n}+\frac{1}{n}\frac{tb(2+tb)(\delta-\delta^{n})}{1-\delta}e^{2b}\right)\norm{x}_{\cX}
			\end{equation}
			for all $x\in\cX$. Then, we follow the proof of Theorem \ref{thm:generell-projective-zeno} and apply Proposition \ref{prop:telescopic-sum} to 
			\begin{equation*}
				F(t,s)=PV(t,s)P
			\end{equation*}
			defined on $P\cX$ w.r.t.~the sequence $s_0=0<\frac{t}{n}<...<\frac{t(n-1)}{n}<t=s_n$ so that
			\begin{equation*}
				\begin{aligned}
					\Bigl\|\prod_{j=1}^n PV(s_{j},s_{j-1})&Px-T(t,0)Px\Bigr\|_{\cX}\\
					&\leq n \max_{j\in\{1,...,n\}}\|\left(PV(s_{j},s_{j-1})P-T(s_j,s_{j-1})\right)T(s_{j-1},s_0)Px\|_{\cX}\,.
				\end{aligned}
			\end{equation*}
			Then, we use $T(s_{j-1},s_0)Px=PT(s_{j-1},s_0)Px$ and the integral equation (\ref{eq:fund-thm-calculus}) to show with $s_j(\tau)=s_j+\frac{(1-\tau_1)t}{n}$
			\begin{equation*}
				\begin{aligned}
					P&V(s_{j},s_{j-1})Px_j-T(s_j,s_{j-1})Px_j\\
					&=\frac{t}{n}\int_{0}^{1}\Biggl(PV(s_{j},s_{j-1}(\tau_1))\cL_{s_{j-1}(\tau_1)}P-T(s_{j},s_{j-1}(\tau_1))P\cL_{s_{j-1}(\tau_1)}P\Biggr)x_jd\tau_1\\
					&=\frac{t}{n}\int_{0}^{1}\Biggl(PV(s_{j},s_{j-1}(\tau_1))(\1-P)\cL_{s_{j-1}(\tau_1)}P\\
					&\qquad\qquad\qquad+PV(s_{j},s_{j-1}(\tau_1))P\cL_{s_{j-1}(\tau_1)}P-T(s_{j},s_{j-1}(\tau_1))P\cL_{s_{j-1}(\tau_1)}P\Biggr)x_jd\tau_1
				\end{aligned}
			\end{equation*}
			and for the last two terms
			\begin{equation*}
				\begin{aligned}
					\frac{t}{n}\int_{0}^{1}\Biggl(&PV(s_{j},s_{j-1}(\tau_1))P\cL_{s_{j-1}(\tau_1)}P-T(s_{j},s_{j-1}(\tau_1))P\cL_{s_{j-1}(\tau_1)}P\Biggr)x_jd\tau_1\\
					&=\frac{t^2}{n^2}\int_{0}^{1}\int_{0}^{1}\Biggl(\tau_1PV(s_{j},s_{j-1}(\tau_1\tau_2))(P+\1-P)\cL_{s_{j-1}(\tau_1\tau_2)}P\cL_{s_{j-1}(\tau_1)}P\\
					&\qquad\qquad\qquad-\tau_1T(s_{j},s_{j-1}(\tau_1\tau_2))P\cL_{s_{j-1}(\tau_1\tau_2)}P\cL_{s_{j-1}(\tau_1)}P\Biggr)x_jd\tau_2d\tau_1\,,
				\end{aligned}
			\end{equation*}
			where $x_j\coloneqq PT(s_{j-1},s_0)Px$. Note that the integral above is again well-defined by Lemma \ref{lem:product-continuity} or \cite[Lem.~B.15]{Engel.2000}. Next, we use the relative $\cY$-boundedness as in Proposition \ref{prop:simple-zeno}, but also the asymptotic Zeno condition, which implies boundedness for all $t\in[0,T]$
			\begin{equation*}
				\|P\cL_t(\1-P)\|_{\cX\rightarrow\cX},\|(\1-P)\cL_tP\|_{\cX\rightarrow\cX}\leq b\,.
			\end{equation*}
			Then, $\|P\cL_tPx\|_{\cX}\leq a_1\|x\|_{\cY}$ and $\|P\cL_tP\cL_sPx\|_{\cX}\leq a_2\|x\|_{\cY}$ for all $(t,s)\in I_T$ show
			\begin{equation*}
				\begin{aligned}
					\|PV(s_{j},s_{j-1})Px_j-T(s_j,s_{j-1})Px_j\|_{\cX}&\leq\frac{t^2}{n^2}\left(b^2\|x_j\|_{\cX}+\left(\frac{a_1b}{2}+a_2\right)\|x_j\|_{\cY}\right)\,.
				\end{aligned}
			\end{equation*}
			Applying $\|PT(s_{j-1},s_0)Px\|_{\cY}\leq a_3e^{\omega}\|Px\|_{\cY}$ to the above bound shows
			\begin{equation*}
				\begin{aligned}
					\|PV(s_{j},s_{j-1})Px_j-T(s_j,s_{j-1})Px_j\|_{\cX}\leq\frac{t^2}{n^2}\left(b^2\|Px\|_{\cX}+\left(\frac{a_1b}{2}+a_2\right)e^{\omega}\|Px\|_{\cY}\right)\,.
				\end{aligned}
			\end{equation*}
			Combining the bound finishes the proof of the result.
		\end{proof}
		\begin{rmk*}
			If we assume $\cL_t\cL_sP$ (instead of $P\cL_tP$) is relatively $\cY$-bounded for all $(t,s)\in I_T$, then, the proof directly follows by Theorem \ref{thm:generell-projective-zeno} after Equation \ref{eq:general-zeno-step1}. However, then Theorem \ref{thm:general-zeno} does not reduce directly to Theorem I in \cite{Moebus.2023}.
		\end{rmk*}
	
	\subsection{Application: The \texorpdfstring{$l$}{l}-photon driven dissipation in error correction}\label{subsec:bosonic-error-correction}
		We continue with the example in Section \ref{subsec:bosonic-trotter} about time-(in)dependent $C_0$-semigroups generated by GKSL-operators (\ref{eq:lindblad2}), in particular, Sobolev preserving semigroups that satisfy Equation \ref{eq:assumsobolevstability} (see Sec.~\ref{subsec:bosonic-trotter} and \ref{subsec:bosonic-semigroups} for more details). In the following, we shortly repeat the proof of these two properties for the $l$-photon driven dissipation for simplicity in one-mode \cite[Lem.~5.3]{Gondolf.2023} and give two specific examples for projections satisfying the assumptions of the statements before (\ref{prop:simple-zeno}, Theorem \ref{thm:generell-projective-zeno}, \ref{thm:general-zeno}). The $l$-photon driven dissipation (\ref{eq:l-photon-dissipation-hamiltonian}) defined for $x\in\cT_f$ by
		\begin{equation}
			\cL[a^l-\alpha^l][x]=(a^l-\alpha^l)x((\ad)^l-\Bar{\alpha}^l)-\frac{1}{2}\left\{((\ad)^l-\Bar{\alpha}^l)(a^l-\alpha^l),x\right\}
		\end{equation}
	 	is the fundamental dynamics for the bosonic error correction code introduced in \cite{Mirrahimi.2014, Guillaud.2019, Guillaud.2023}. Using the CCR and the simple commutation relation $af(N+j\1)=f(N+(j+1)\1)a$ for any real-valued function $f:\R\rightarrow\R$, we have for $f(x)=(x+1)^{k/2} 1_{x\ge -1}$ and $g_l(x)\coloneqq f(x)-f(x-l)$:
	 	\begin{equation*}
	 		\begin{aligned}
	 			\cL[a_1^l-\alpha^l]^\dagger(f(N))&=-(N-(l-1)\1)\cdots Ng_l(N)+\frac{1}{2}\left(\overline{\alpha}^la^lg_l(N)+\alpha^lg_l(N)(\ad)^l\right)\,.
	 		\end{aligned}
	 	\end{equation*}
 		Note that the first term contributes a diagonal and the others, two off-diagonals with respect to the Fock basis. The operator can be decomposed in blocks defined via $a=\frac{h_l(n)g_l(n)}{2}$, $b=\frac{h_l(n+l)g_l(n+l)}{2}$, and $c=\frac{1}{2}\sqrt{h_l(n+l)}\bar{\alpha}^lg_l(n+l)$ by
 		\begin{equation*}
 			\begin{aligned}
 				\frac{1}{2}&\Bigl(-a\ket{n}\bra{n}-b\ket{n+l}\bra{n+l}+c^*\ket{n}\bra{n+l}+c\ket{n+l}\bra{n}\Bigr)\\
 				&\coloneqq\frac{1}{2}\Bigl(-h_l(n)g_l(n)\ket{n}\bra{n}-h_l(n+l)g_l(n+l)\ket{n+l}\bra{n+l}\\
 				&\qquad\qquad\qquad+\sqrt{h_l(n+l)}\left(\bar{\alpha}^lg_l(n+l)\ket{n}\bra{n+l}+\alpha^lg_l(n+l)\ket{n+l}\bra{n}\right)\Bigr)
 			\end{aligned}
 		\end{equation*}
 		for $n\in\N$ and  the real valued function $h_l:\R\rightarrow\R, h_l(x)\coloneqq(x-(l-1)\1)\cdots x1_{x\ge l}$. By considering the eigenvalues of the above operator, it can be upper bounded by
 		\begin{equation*}
 			\begin{aligned}
 				\frac{1}{2}&\Bigl(-a\ket{n}\bra{n}-b\ket{n+l}\bra{n+l}+c^*\ket{n}\bra{n+l}+c\ket{n+l}\bra{n}\Bigr)\\
 				&\leq\frac{1}{4}\left(-a-b+\sqrt{(a-b)^2+4|c|^2}\right)\Bigl(\ket{n}\bra{n}+\ket{n+l}\bra{n+l}\Bigr)\\
 				&\leq\frac{1}{2}\left(-\min\{a,b\}+|c|\right)\Bigl(\ket{n}\bra{n}+\ket{n+l}\bra{n+l}\Bigr)\\
 				&=\frac{1}{2}\left(-h_l(n)g_l(n)+\sqrt{h_l(n+l)}|\alpha|^lg_l(n+l)\right)\Bigl(\ket{n}\bra{n}+\ket{n+l}\bra{n+l}\Bigr)\\
 				&\leq\frac{1}{2}\left(-(n+1)^{k/2-1+l}+c_k^{(l)}\right)\Bigl(\ket{n}\bra{n}+\ket{n+l}\bra{n+l}\Bigr)\,,
 			\end{aligned}
 		\end{equation*}
 		where the constant $c_k^{(l)}\geq0$ exists because the first term is of negative leading order (a more detailed discussion can be found in \cite[Lem.~5.3]{Gondolf.2023}). All together, we achieve
		\begin{equation*}
			\cL[a^l-\alpha^l]^\dagger(f(N))\leq-\frac{l}{2}(N+\1)^{k/2-1+l}+c_k^{(l)}\1\,,
		\end{equation*}
	 	which implies Assumption \ref{eq:assumsobolevstability} and a better Sobolev preserving constant \cite[Prop.~5.1]{Gondolf.2023}:
	 	\begin{equation*}
	 		\begin{aligned}
 				\tr\big[\cL[a^l-\alpha^l ](\rho)(N+\1)^{k/2}\big]&\le -\frac{l}{2} \tr\big[\rho\,(N+\1)^{k/2}\big]+c_k^{(l)}\,.
	 		\end{aligned}
	 	\end{equation*}
	 	However, we are interested in the Sobolev preserving property of $(P\cL P,\cD(P\cL P))$ because this is one way to find the right Banach space $\cY=W^{k,1}$ in the statements above. Here, the idea is to choose $k$ large enough such that the relative boundedness assumption is achieved (for comparison see Equation \ref{eq:relative-boundedness}). However, this strongly depends on $P$. To analyze this, we will consider two cases: First, $P(\cdot)=K\cdot K^\perp$ for a projection $K$ diagonal in the Fock basis. Then, the same reasoning as above proves the well-definedness as well as the Sobolev preserving property of the semigroup defined by $P\cL P$. The same holds true for the time-dependent analogues generated by (Eq.~\ref{eq:time-dep-master-equation}). Another example is the $Z(\theta)$ gate of the bosonic error correction code discussed in \cite{Mirrahimi.2014,Guillaud.2019,Guillaud.2023}. The physical codespace to which the evolution is exponentially converging (see Eq.~\ref{eq:convegrenc-l-photon-dissipation}) and \cite{Azouit.2016}) is defined by 
        \begin{equation*}
            \cC_l\coloneqq\spa\left\{\ketbra{\alpha_1}{\alpha_2}\,:\,\alpha_1,\alpha_2\in\left\{\alpha e^{\frac{i2\pi j}{l}}\,|\,j\in\{0,...,l-1\}\right\}\right\}\,,
        \end{equation*}
        where $\ket{\alpha}$ denotes the coherent state. In particular, we are interested in the case $l=2$. Next, we define the Schrödinger CAT-states, which represent the logical qubits:
        \begin{equation*}
        	\ket{CAT_{\alpha}^+}\coloneqq \frac{1}{\sqrt{2(1+e^{2|\alpha|^2})}}\left(\ket{\alpha}+\ket{-\alpha}\right)\quad\text{and}\quad\ket{CAT_{\alpha}^-}\coloneqq \frac{1}{\sqrt{2(1-e^{2|\alpha|^2})}}\left(\ket{\alpha}-\ket{-\alpha}\right)\,.
        \end{equation*}  
    	Note that the above CAT-states are orthonormal. Then, a possible rotation around the $x$-axis is described in \cite{Mirrahimi.2014} and experimentally realized in \cite{Touzard.2018}. The gate is defined by
    	\begin{equation}\label{eq:rotation}
    		X(\theta)=\cos(\frac{\theta}{2})(P^+_{\alpha}+P^-_{\alpha})+i\sin(\frac{\theta}{2})X_{\alpha}\,,
    	\end{equation}
    	where $P^+=\ket{CAT^+_{\alpha}}\bra{CAT^+_{\alpha}}$, $P^-=\ket{CAT^-_{\alpha}}\bra{CAT^-_{\alpha}}$ are the projection of the CAT-states and $X_{\alpha}=\ket{CAT^+_{\alpha}}\bra{CAT^-_{\alpha}}+\ket{CAT^-_{\alpha}}\bra{CAT^+_{\alpha}}$. Specially, here we discuss a construction of a $X(\theta)$-gate by a discrete projective Zeno effect w.r.t.~the projection $P_\alpha=P_\alpha^++P_\alpha^-$ and the driving Hamiltonian $H=a+\ad$. Then, the Zeno limit describes Equation \ref{eq:rotation}:
        \begin{equation}
            P_\alpha HP_\alpha=(\alpha+\alpha^\dagger) X_{\alpha}\,.
        \end{equation}
    	Since $P_\alpha HP_\alpha$ acts only non-trivially on a 2-dimensional subspace, the Zeno semigroup is 
    	\begin{equation}
    		e^{tiP_\alpha HP_\alpha}P_\alpha=\cos(t(\alpha+\alpha^\dagger))P_\alpha+i\sin(t(\alpha+\alpha^\dagger))X_{\alpha}
    	\end{equation}
    	and constructs the rotation (\ref{eq:rotation}) for appropriate $t$. The convergence 
    	\begin{equation*}
    		\|\left(P_\alpha e^{i\frac{t}{n}[H,\cdot]}\right)^nx-e^{ti[P_\alpha HP_{\alpha},P_{\alpha}\cdot P_{\alpha}]}P_\alpha x\|_{1\rightarrow1}=\cO\Bigl(\frac{t^2}{n}\Bigr)
    	\end{equation*}
    	is a consequence of Proposition \ref{prop:simple-zeno} or a simple case of Theorem \ref{thm:general-zeno}. Here, $\cY$ is just defined as $\cP_\alpha\cT_1$ with $\cP_\alpha=P_\alpha\cdot P_\alpha$. By the boundedness of $P_\alpha H$ and $HP_\alpha$ all assumptions are satisfied and the bound holds true in the uniform topology. The same would hold true for more complex and time-dependent $H$ defined by a polynomial in annihilation and creation operator. The case $l=4$ described in \cite{Mirrahimi.2014} can be equally treated.
    	\begin{rmk}\label{rmk:zeno-improvement}
    		In practice the projection is implemented by the $l$-photon driven dissipation semigroup, which converges in the following sense to $P_\alpha$ \cite[Theorem 2]{Azouit.2016}
    		\begin{equation}\label{eq:convegrenc-l-photon-dissipation}
    			\tr[(a^l-\alpha^l)e^{t\cL_l}(\rho) (a^l-\alpha^l)^\dagger]\leq e^{-l!t}\tr[(a^l-\alpha^l)\rho (a^l-\alpha^l)^\dagger]\,.
    		\end{equation}
    		This convergence does not satisfy Theorem \ref{thm:general-zeno}. Therefore, a future line of research could be to improve Equation \ref{eq:convegrenc-l-photon-dissipation} or Assumption \ref{unifpower1} in Theorem \ref{thm:general-zeno} in the direction of Theorem 2 in \cite{Becker.2021}.
    	\end{rmk}
       
\section{\bfseries Conclusion}\label{sec:conclusion} 
	To summarize, we have seen a range of quantitative error bounds for generalizations of Trotter's product formula and the quantum Zeno effect in the case of $C_0$-semigroups. Our primary contribution lies not only in the extension of the Trotter product formula and the quantum Zeno effect to evolution systems (time-dependent semigroups) but also in providing Suzuki-higher order error bounds for the Trotter-Suzuki product formula in the strong operator topology and a large class of bosonic examples. Moreover, we discussed some examples to explore possible definitions of the abstract reference space $\cY$.
	
	For applications, a desirable line of research would entail a detailed discussion of possible choices of $\cY$ as well as examples which does not satisfy the assumption. For the quantum Zeno effect, a generalization of the convergence assumption of the contraction $M$, as discussed in Remark \ref{rmk:zeno-improvement}, would be fruitful. Here, an interesting interplay of the convergence assumption as well as the regularity of the semigroup quantified by the stable subspace could come up.

\vspace{1ex}
\emph{Acknowledgments:} 
I would like to thank Paul Gondolf, Cambyse Rouzé, Robert Salzmann, Simon Becker, Lauritz van Luijk, Niklas Galke, Stefan Teufel, Marius Lemm, and Yu-Jie Liu for valuable feedback and discussions on the topic. I also acknowledge the support of the Munich Center for Quantum Sciences and Technology and of the Deutsche Forschungsgemeinschaft (DFG, German Research Foundation) – TRR 352 – Project-ID 470903074. Furthermore, this work was supported by the University of Nottingham and the University of Tübingen’s funding as part of the Excellence Strategy of the German Federal and State Governments, in close collaboration with the University of Nottingham.

\setlength{\bibitemsep}{0.05ex}
\printbibliography[heading=bibnumbered]

@article{Azouit.2016,
	title = {Well-posedness and convergence of the Lindblad master equation for a quantum harmonic oscillator with multi-photon drive and damping},
	volume = {22},
	ISSN = {1262-3377},
	doi = {10.1051/cocv/2016050},
	number = {4},
	journal = {ESAIM: Control,  Optimisation and Calculus of Variations},
	publisher = {EDP Sciences},
	author = {Azouit,  Rémi and Sarlette,  Alain and Rouchon,  Pierre},
	year = {2016}
}

@article{Bachmann.2022,
	title = {Trotter Product Formulae for {$*$}-Automorphisms of Quan-tum Lattice Systems},
	volume = {23},
	doi = { 10.1007/s00023-022-01207-8},
	number = {12},
	journal = {Annales Henri Poincaré},
	publisher = {Springer Science and Business Media LLC},
	author = {Bachmann,  Sven and Lange,  Markus},
	year = {2022}
}

@misc{Barankai.2018,
	author = {Barankai, Norbert and Zimbor{\'a}s, Zolt{\'a}n},
	title={Generalized quantum Zeno dynamics and ergodic means}, 
	year={2018},
	doi = {10.48550/arXiv.1811.02509},
	eprint={1811.02509},
	archivePrefix={arXiv},
	primaryClass={math-ph}
}

@misc{Becker.2024,
      title={Convergence rates for the Trotter-Kato splitting}, 
      author={Simon Becker and Niklas Galke and Robert Salzmann and Lauritz van Luijk},
      year={2024},
      doi = {10.48550/arXiv.2407.04045},
      archivePrefix={arXiv},
      primaryClass={math-ph},
      eprint={2407.04045}
}

@article{Becker.2021,
	author = {Becker, Simon and Datta, Nilanjana and Salzmann, Robert},
	year = {2021},
	title = {Quantum Zeno Effect in Open Quantum Systems},
	volume = {22},
	number = {11},
	issn = {1424-0661},
	journal = {Annales Henri Poincar{\'e}},
	doi = {10.1007/s00023-021-01075-8}
}

@article{Beskow.1967,
	author = {Beskow, A. and Nilsson, J.},
	year = {1967},
	title = {Concept of wave function and the irreducible representations of the Poincaré group. II. Unstable systems and the exponential decay law.},
	journal = {Inst. of Theoretical Physics, Goteborg}
}

@misc{Burgarth.2023,
	title={Strong Error Bounds for Trotter {\&} Strang-Splittings and Their Implications for Quantum Chemistry}, 
	author={Daniel Burgarth and Paolo Facchi and Alexander Hahn and Mattias Johnsson and Kazuya Yuasa},
	year={2023},
	archivePrefix={arXiv},
	primaryClass={quant-ph},
    doi = {10.48550/arXiv.2312.08044},
    eprint={2312.08044}
}

@article{Burgarth.2020,
	title={Quantum Zeno Dynamics from General Quantum Operations},
	volume={4},
	doi = {10.22331/q-2020-07-06-289},
	journal={Quantum},
	publisher={Verein zur Forderung des Open Access Publizierens in den Quantenwissenschaften},
	author={Burgarth, Daniel and Facchi, Paolo and Nakazato, Hiromichi and Pascazio, Saverio and Yuasa, Kazuya},
	year={2020},
}

@book{Bratteli.1981,
	author = {Bratteli, Ola and Robinson, Derek W.},
	year = {1981},
	title = {Operator Algebras and Quantum Statistical Mechanics II},
	subtitle = {Equilibrium States Models in Quantum Statistical Mechanics},
	publisher = {Springer},
	address = {Berlin, Heidelberg},
	isbn = {978-3-662-09089-3},
	series = {Texts and Monographs in Physics},
	doi = {10.1007/978-3-662-09089-3}
}

@book{Conway.2007,
	title = {A Course in Functional Analysis},
	ISBN = {978-1-4757-4383-8},
	doi = {10.1007/978-1-4757-4383-8},
	journal = {Graduate Texts in Mathematics},
	publisher = {Springer New York},
	author = {Conway,  John B.},
	year = {2007}
}

@article{Dominy.2013,
	author = {Dominy, Jason M. and Paz-Silva, Gerardo A. and Rezakhani, A. T. and Lidar, Daniel A.},
	year = {2013},
	title = {Analysis of the quantum Zeno effect for quantum control and computation},
	volume = {46},
	number = {7},
	journal = {Journal of Physics A: Mathematical and Theoretical},
	doi = {10.1088/1751-8113/46/7/075306}
}

@book{Engel.2000,
	author = {Engel, Klaus-Jochen and Nagel, Rainer},
	year = {2000},
	title = {One-parameter semigroups for linear evolution equations},
	address = {New York and London},
	volume = {194},
	publisher = {Springer},
	series = {Graduate texts in mathematics},
	doi = {10.1007/b97696},
}

@article{Erez.2004,
	author = {Erez, Noam and Aharonov, Yakir and Reznik, Benni and Vaidman, Lev},
	year = {2004},
	title = {Correcting quantum errors with the Zeno effect},
	volume = {69},
	number = {6},
	issn = {1050-2947},
	journal = {Physical review. A, Atomic, molecular, and optical physics},
	doi = {10.1103/PhysRevA.69.062315}
}

@article{Exner.2021,
	author = {Exner, Pavel and Ichinose, Takashi},
	year = {2021},
	title = {Note on a Product Formula Related to Quantum Zeno Dynamics},
	volume = {22},
	number = {5},
	issn = {1424-0661},
	journal = {Annales Henri Poincar{\'e}},
	doi = {10.1007/s00023-020-01014-z}
}

@article{Facchi.2004,
	author = {Facchi, Paolo and Lidar, Daniel A. and Pascazio, Saverio},
	year = {2004},
	title = {Unification of dynamical decoupling and the quantum Zeno effect},
	volume = {69},
	number = {3},
	issn = {1050-2947},
	journal = {Physical review. A, Atomic, molecular, and optical physics},
	doi = {10.1103/PhysRevA.69.032314}
}

@article{Facchi.2008,
	author = {Facchi, Paolo and Pascazio, Saverio},
	year = {2008},
	title = {Quantum Zeno dynamics: mathematical and physical aspects},
	volume = {41},
	number = {49},
	issn = {1751-8121},
	journal = {Journal of Physics A: Mathematical and Theoretical},
	doi = {10.1088/1751-8113/41/49/493001}
}

@misc{Farhi.2000,
	title={Quantum Computation by Adiabatic Evolution}, 
	author={Edward Farhi and Jeffrey Goldstone and Sam Gutmann and Michael Sipser},
	year={2000},
	eprint={quant-ph/0001106},
	archivePrefix={arXiv},
	primaryClass={quant-ph},
	doi = {10.48550/arXiv.quant-ph/0001106}, 
}

@article{Farhi.2001,
	title = {A Quantum Adiabatic Evolution Algorithm Applied to Random Instances of an NP-Complete Problem},
	volume = {292},
	doi = {10.1126/science.1057726},
	number = {5516},
	journal = {Science},
	publisher = {American Association for the Advancement of Science (AAAS)},
	author = {Farhi,  Edward and Goldstone,  Jeffrey and Gutmann,  Sam and Lapan,  Joshua and Lundgren,  Andrew and Preda,  Daniel},
	year = {2001}
}

@article{Feynman.1982,
	title = {Simulating physics with computers},
	volume = {21},
	doi = {10.1007/bf02650179},
	number = {6–7},
	journal = {International Journal of Theoretical Physics},
	publisher = {Springer Science and Business Media LLC},
	author = {Feynman,  Richard P.},
	year = {1982}
}

@book{Giuseppe.1976,
	title={Hyperbolicity},
	editor={Prato, Giuseppe and Geymonat, Giuseppe},
	isbn={9783642111044},
	series={C.I.M.E. Summer Schools},
	year={1976},
	publisher={Springer}
}

@misc{Gondolf.2023,
	title={Energy preserving evolutions over Bosonic systems}, 
	author={Paul Gondolf and Tim Möbus and Cambyse Rouzé},
	year={2024},
	eprint={2307.13801},
	archivePrefix={arXiv},
	primaryClass={quant-ph},
	doi = {10.48550/arXiv.2307.13801}
}

@article{Guillaud.2023,
	title={Quantum computation with cat qubits},
	ISSN={2590-1990},
	doi = {10.21468/SciPostPhysLectNotes.72},
	journal={SciPost Physics Lecture Notes},
	publisher={Stichting SciPost},
	author={Guillaud, Jérémie and Cohen, Joachim and Mirrahimi, Mazyar},
	year={2023}
}

@article{Guillaud.2019,
    title = {Repetition Cat Qubits for Fault-Tolerant Quan-tum Computation},
    author = {Guillaud, J\'er\'emie and Mirrahimi, Mazyar},
    journal = {Physical Review X},
    volume = {9},
    issue = {4},
    year = {2019},
    publisher = {American Physical Society},
    doi = {10.1103/PhysRevX.9.041053}
}

@inbook{Hatano.2005,
	title = {Finding Exponential Product Formulas of Higher Orders},
	ISBN = {9783540315155},
	doi = {10.1007/11526216_2},
	booktitle = {Lecture Notes in Physics},
	publisher = {Springer Berlin Heidelberg},
	author = {Hatano,  Naomichi and Suzuki,  Masuo},
	year = {2005},
	pages = {37–68}
}

@book{Hille.1996,
	author = {Hille, Einar and Phillips, Ralph S.},
	year = {1996},
	title = {Functional analysis and Semi-Groups},
	edition = {Reviewed and expanded edition},
	volume = {31},
	publisher = {American Mathematical Society},
	isbn = {978-0-8218-3395-7},
	series = {Colloquium Publications},
	publisher = {American Mathematical Society},
	doi = {10.1090/coll/031},
}

@article{Ichinose.1998,
	title={Error estimate in operator norm of exponential product formulas for propagators of parabolic evolution equations},
	author={Ichinose, Takashi and Tamura, Hideo},
	year={1998},
	journal={Osaka Journal of Mathematics},
	volume={35},
	pages={751-770},
	publisher = {Osaka University and Osaka Metropolitan University, Departments of Mathematics}
}

@article{Ikeda.2023,
	title = {Minimum Trotterization Formulas for a Time-Dependent Hamiltonian},
	volume = {7},
	ISSN = {2521-327X},
	doi = {10.22331/q-2023-11-06-1168},
	journal = {Quantum},
	publisher = {Verein zur Forderung des Open Access Publizierens in den Quantenwissenschaften},
	author = {Ikeda,  Tatsuhiko N. and Abrar,  Asir and Chuang,  Isaac L. and Sugiura,  Sho},
	year = {2023},
}

@article{Itano.1990,
	author = {Itano, Wayne M. and Heinzen, Daniel J. and Bollinger, John J. and Wineland, David J.},
	year = {1990},
	title = {Quantum Zeno effect},
	volume = {41},
	number = {5},
	issn = {1050-2947},
	journal = {Physical review. A, Atomic, molecular, and optical physics},
	doi = {10.1103/physreva.41.2295}
}

@book{Kato.1995,
	author = {Kato, Tosio},
	year = {1995},
	title = {Perturbation Theory for Linear Operators},
	edition = {2},
	volume = {132},
	publisher = {{Springer International Publishing}},
	isbn = {9783642662829},
	series = {A Series of Comprehensive Studies in Mathematics},
	doi = {10.1007/978-3-642-66282-9}
}

@article{Kato.1978,
	title={Trotter's product formula for an arbitrary pair of self-adjoint contraction semigroup},
	author={Kato, Tosio},
	journal={Topics in Func. Anal., Adv. Math. Suppl. Studies},
	volume={3},
	pages={185--195},
	year={1978}
}

@article{Kato.1974,
	author = {Tosio Kato},
	title = {On the Trotter-Lie product formula},
	volume = {50},
	journal = {Proceedings of the Japan Academy},
	number = {9},
	publisher = {The Japan Academy},
	pages = {694 -- 698},
	year = {1974},
	doi = {10.3792/pja/1195518790}
}

@book{Kreyszig.1989,
	author = {Kreyszig, Erwin},
	year = {1989},
	title = {Introductory Functional Analysis with Applications},
	address = {New York},
	publisher = {{Wiley classics library}},
	isbn = {0471504599}
}

@article{Lloyd.1996,
	title = {Universal Quantum Simulators},
	volume = {273},
	ISSN = {1095-9203},
	doi = {10.1126/science.273.5278.1073},
	number = {5278},
	journal = {Science},
	publisher = {American Association for the Advancement of Science (AAAS)},
	author = {Lloyd,  Seth},
	year = {1996}
}

@article{Mirrahimi.2014,
	doi = {10.1088/1367-2630/16/4/045014},
	year = 2014,
	publisher = {{IOP} Publishing},
	volume = {16},
	number = {4},
	pages = {045014},
	author = {Mazyar Mirrahimi and Zaki Leghtas and Victor V. Albert and Steven Touzard and Robert J Schoelkopf and Liang Jiang and Michel H Devoret},
	title = {Dynamically protected cat-qubits: a new paradigm for universal quan-tum computation},
	journal = {New Journal of Physics}
}

@article{Misra.1977,
	author = {Misra, Baidyanaith and Sudarshan, George},
	year = {1977},
	title = {The Zeno's paradox in quantum theory},
	volume = {18},
	number = {4},
	issn = {0022-2488},
	journal = {Journal of Mathematical Physics},
	doi = {10.1063/1.523304}
}

@article{Mobus.2019,
	author = {M{\"o}bus, Tim and Wolf, Michael M.},
	year = {2019},
	title = {Quantum Zeno effect generalized},
	volume = {60},
	number = {5},
	issn = {0022-2488},
	journal = {Journal of Mathematical Physics},
	doi = {10.1063/1.5090912}
}

@article{Moebus.2023,
	author = {Tim Möbus and Cambyse Rouzé},
	year = {2023},
	title = {Optimal Convergence Rate in the Quantum Zeno Effect for Open Quantum Systems in Infinite Dimensions},
	journal = {Annales Henri Poincaré},
	doi = {10.1007/s00023-022-01241-6}
}

@misc{Moebus.2023Learning,
	title={Dissipation-enabled bosonic Hamiltonian learning via new information-propagation bounds}, 
	author={Tim Möbus and Andreas Bluhm and Matthias C. Caro and Albert H. Werner and Cambyse Rouzé},
	year={2023},
	eprint={2307.15026},
	archivePrefix={arXiv},
	primaryClass={quant-ph},
	doi = {m87m}
}

@article{Neidhardt.2019,
	title={Trotter Product Formula and Linear Evolution Equations on Hilbert Spaces},
	author={Hagen Neidhardt and Artur Stephan and Valentin Anatol'evich Zagrebnov},
	journal={Analysis and Operator Theory},
	year={2019},
	doi = {10.1007/978-3-030-12661-2_13}
}

@book{Nielsen.2012,
	title = {Quantum Computation and Quantum Information: 10th Anniversary Edition},
	ISBN = {9780511976667},
	doi = {10.1017/cbo9780511976667},
	publisher = {Cambridge University Press},
	author = {Nielsen,  Michael A. and Chuang,  Isaac L.},
	year = {2012}
}

@article{Neidhardt.1999,
	title = {Trotter–Kato Product Formula and Operator-Norm Convergence},
	volume = {205},
	ISSN = {1432-0916},
	doi = {10.1007/s002200050671},
	number = {1},
	journal = {Communications in Mathematical Physics},
	publisher = {Springer Science and Business Media LLC},
	author = {Neidhardt,  Hagen and Zagrebnov,  Valentin A.},
	year = {1999}
}

@article{Neidhardt.1998,
	titel = {On Error Estimates for the Trotter–Kato Product Formula},
	volume = {44},
	ISSN = {0377-9017},
	doi = {10.1023/a:1007494816401},
	number = {3},
	journal = {Letters in Mathematical Physics},
	publisher = {Springer Science and Business Media LLC},
	author = {Neidhardt,  Hagen and Zagrebnov,  Valentin A.},
	year = {1998},
	pages = {169–186}
}

@book{Pazy.1983,
	title = {Semigroups of Linear Operators and Applications to Partial Differential Equations},
	ISBN = {9781461255611},
	ISSN = {0066-5452},
	doi = {10.1007/978-1-4612-5561-1},
	journal = {Applied Mathematical Sciences},
	publisher = {Springer New York},
	author = {Pazy, Amnon},
	year = {1983}
}

@book{Reed.1980,
	author = {Reed, Michael and Simon, Barry},
	year = {1980},
	title = {Functional analysis},
	edition = {Revised and Enlarged Edition},
	volume = {1},
	publisher = {{Academic Press}},
	isbn = {9780080570488},
	series = {Methods of Modern Mathematical Physics}
}

@article{Rogava.1993,
	title = {Error bounds for Trotter-type formulas for self-adjoint operators},
	volume = {27},
	ISSN = {1573-8485},
	doi = {d676nr},
	number = {3},
	journal = {Functional Analysis and Its Applications},
	publisher = {Springer Science and Business Media LLC},
	author = {Rogava,  Jemal},
	year = {1993}
}

@incollection{Schmidt.2004,
	author = {Schmidt, Andreas U.},
	title = {Mathematics of the Quantum Zeno Effect},
	pages = {113--143},
	publisher = {{Nova Science Publishers}},
	isbn = {1-59033-905-3},
	editor = {Benton, Charles V.},
	booktitle = {Mathematical physics research on the leading edge},
	year = {2004},
	doi = {10.48550/arXiv.math-ph/0307044}
}

@book{Simon.2015,
	author = {Simon, Barry},
	year = {2015},
	title = {Operator theory},
	address = {Providence, Rhode Island},
	volume = {4},
	publisher = {{American Mathematical Society}},
	isbn = {9781470411039},
	series = {A comprehensive course in analysis}
}

@misc{Stephan.2023,
	title={Trotter-type formula for operator semigroups on product spaces}, 
	author={Artur Stephan},
	year={2023},
	eprint={2307.00419},
	archivePrefix={arXiv},
	primaryClass={math.FA},
	doi = {10.48550/arXiv.2307.00419}
	
}

@article{Sun.2020,
	title = {Trotterized adiabatic quantum simulation and its application to a simple all-optical system},
	volume = {22},
	doi = {10.1088/1367-2630/ab7a31},
	number = {5},
	journal = {New Journal of Physics},
	publisher = {IOP Publishing},
	author = {Sun,  Yifan and Zhang,  Jun-Yi and Byrd,  Mark S and Wu,  Lian-Ao},
	year = {2020}
}

@article{Suzuki.1997,
	author = {Suzuki, Masuo},
	year = {1997},
	title = {Quantum analysis---Non-commutative differential and integral calculi},
	pages = {339--363},
	volume = {183},
	number = {2},
	issn = {0010-3616},
	journal = {Communications in Mathematical Physics},
	doi = {10.1007/BF02506410}
}

@inbook{Suzuki.1993,
	title = {Higher-Order Decomposition Theory of Exponential Operators and Its Applications to QMC and Nonlinear Dynamics},
	ISBN = {9783642784484},
	doi = {10.1007/978-3-642-78448-4_7},
	booktitle = {Computer Simulation Studies in Condensed-Matter Physics VI},
	publisher = {Springer Berlin Heidelberg},
	author = {Suzuki, Masuo and Umeno,  Ken},
	year = {1993},
	pages = {74–86}
}

@article{Suzuki.1991,
	title = {General theory of fractal path integrals with applications to many-body theories and statistical physics},
	volume = {32},
	doi = {10.1063/1.529425},
	number = {2},
	journal = {Journal of Mathematical Physics},
	publisher = {AIP Publishing},
	author = {Suzuki,  Masuo},
	year = {1991},
	pages = {400–407}
}

@article{Suzuki.1976,
	title = {Generalized Trotter’s formula and systematic approximants of exponential operators and inner derivations with applications to many-body problems},
	volume = {51},
	ISSN = {1432-0916},
	doi = {10.1007/bf01609348},
	number = {2},
	journal = {Communications in Mathematical Physics},
	publisher = {Springer Science and Business Media LLC},
	author = {Suzuki,  Masuo},
	year = {1976},
	pages = {183–190}
}

@article{Tamura.2000,
	title = {A remark on operator-norm convergence of Trotter-Kato product formula},
	volume = {37},
	ISSN = {1420-8989},
	doi = {10.1007/bf01194484},
	number = {3},
	journal = {Integral Equations and Operator Theory},
	publisher = {Springer Science and Business Media LLC},
	author = {Tamura,  Hiroshi},
	year = {2000}
}

@article{Touzard.2018,
	title={Coherent Oscillations inside a Quantum Manifold Stabilized by Dissipation},
	volume={8},
	ISSN={2160-3308},
	doi = {10.1103/physrevx.8.021005},
	number={2},
	journal={Physical Review X},
	publisher={American Physical Society (APS)},
	author={Touzard, S. and Grimm, A. and Leghtas, Z. and Mundhada, S. O. and Reinhold, P. and Axline, C. and Reagor, M. and Chou, K. and Blumoff, J. and Sliwa, K. M. and Shankar, S. and Frunzio, L. and Schoelkopf, R. J. and Mirrahimi, M. and Devoret, M. H.},
	year={2018}
}

@article{Trotter.1959,
	author = {Trotter, Hale F.},
	year = {1959},
	title = {On the Product of Semi-Groups of Operators},
	pages = {545},
	volume = {10},
	number = {4},
	issn = {00029939},
	journal = {Proceedings of the American Mathematical Society},
	doi = {c49mx4}
}

@misc{Vanluijk.2024,
	title={Energy-limited quantum dynamics}, 
	author={Lauritz van Luijk},
	year={2024},
	eprint={2405.10259},
	archivePrefix={arXiv},
	primaryClass={quant-ph},
	doi = {10.48550/arXiv.2405.10259}, 
}

@article{Vuillermot.2009,
	title = {A general Trotter–Kato formula for a class of evolution operators},
	volume = {257},
	ISSN = {0022-1236},
	doi = {10.1016/j.jfa.2009.06.026},
	number = {7},
	journal = {Journal of Functional Analysis},
	publisher = {Elsevier BV},
	author = {Vuillermot,  Pierre-A. and Wreszinski,  Walter F. and Zagrebnov,  Valentin A.},
	year = {2009},
}

@inbook{Zagrebnov.2022sumrate,
	title = {Operator-Norm Trotter Product Formula on Banach Spaces},
	ISBN = {9783031567209},
	ISSN = {2296-4878},
	DOI = {10.1007/978-3-031-56720-9_10},
	booktitle = {Operator Theory: Advances and Applications},
	publisher = {Springer Nature Switzerland},
	author = {Zagrebnov,  Valentin A. and Neidhardt,  Hagen and Ichinose,  Takashi},
	year = {2024},
	pages = {451–484}
}

@book{Zagrebnov.2019,
	title = {Gibbs Semigroups},
	ISBN = {9783030188771},
	doi = {10.1007/978-3-030-18877-1},
	journal = {Operator Theory: Advances and Applications},
	publisher = {Springer International Publishing},
	author = {Zagrebnov,  Valentin A.},
	year = {2019}
}
\end{document}